\title{Self-reducible with easy decision version counting problems admit  additive error approximation. Connections to counting complexity, exponential time complexity, and circuit lower bounds}
\author{Eleni Bakali \\ National Technical University of Athens, CS dept. \\ mpakali@corelab.ntua.gr}
\newtheorem{theorem}{Theorem}
\newtheorem{lemma}{Lemma}
\newtheorem{definition}{Definition}
\newtheorem{corollary}{Corollary}
\newtheorem{proposition}{Proposition}
\newenvironment{proof}{\noindent\textit{Proof}. }{\hfill $\Box$}
\begin{document}
\maketitle

\begin{abstract}
We consider the class of counting problems,i.e. functions in $\#$P, which are self reducible, and have easy decision version, i.e. for every input it is easy to decide if the value of the function $f(x)$ is zero. For example, $\#$independent-sets of all sizes, is such a problem, and one of the hardest of this class, since it is equivalent to $\#$SAT under multiplicative approximation preserving reductions. 

Using these two powerful properties, self reducibility and easy decision, we prove that all problems/ functions $f$ in this class can be approximated in probabilistic polynomial time within an absolute exponential error $\epsilon\cdot 2^{n'}, \forall\epsilon>0$, which for many of those problems (when $n'=n+$constant) implies additive approximation to the fraction $f(x)/2^n$. (Where $n'$ is the amount of non-determinism of some associated NPTM). 

Moreover we show that for all these problems we can have multiplicative error to the value  $f(x)$, of any desired accuracy (i.e. a RAS), in time of order $2^{2n'/3}poly(n)$, which is strictly smaller than exhaustive search. We also show that $f(x)<g(x)$ can be decided deterministically in time $g(x)poly(n), \forall g$. 

Finally we show that the Circuit Acceptance Probability Problem, which is related to derandomization and circuit lower bounds, can be solved with high probability and in polynomial time, for the family of all circuits for which the problems of counting either satisfying or unsatisfying assignments belong to TotP (which is the Karp-closure of self reducible problems with easy decision version). 
\end{abstract}

\section{Introduction}
In counting complexity we explore the computational complexity of functions that count the number of solutions of a decision problem. In general, the counting versions of problems are computationally more difficult than their decision versions. For example, given a DNF formula, it is easy to determine if it is satisfiable, but it seems hard to count the number of its satisfying assignments. Another example is counting independent sets of all sizes for a given graph. It is obvious that it is easy to tell if there is some independent set of any size, since a single node always is an independent set.
However it is one of the hardest problems to count, or even approximate, the number of all independent sets.

In this paper we consider the class of all self reducible problems with easy decision version. A problem here is called self reducible if the computation on a given instance can be reduced to a polynomial number of sub-instances of the same problem, and the the height of the corresponding self-reducibility tree is polynomial in the size of the input. It is proven in \cite{PZ06} that the Karp closure of self reducible with easy decision version is exactly the class TotP, which is the class of all functions $f$ for which there exist a non-deterministic polynomial time Turing Machine s.t. the number of all computation paths on input $x$ equals $f(x)+1$. A great number of problems of interest in the literature are self reducible, and many of them have easy decision version. 

Examples of such problems, from many different scientific areas are $\#$DNF-Sat, $\#$Monotone-2-Sat, $\#$Non-Cliques, $\#$NonIndependent Sets, NonNegative Permanent, Ranking,
Graph reliability,
$\#$matchings, computing the determinant of a matrix, computing the partition function of several models from statistical physics, like the Ising and the hard-core model, counting colorings of a graph with a number of colors bigger than the maximum degree, counting bases of a matroid, $\#$independent sets of all sizes \cite{SinclairNotesMC}, and many more. (Definitions of the above and references  can be found in \cite{PZ06,AB09, SinclairNotesMC}).

Since computing exactly counting problems seems hard (unless P=NP), a first question to ask about them is their approximability status.  Concerning multiplicative error, it is proven \cite{JS89} that for self reducible problems, even a polynomial multiplicative-error deterministic (or randomized) algorithm can be transformed to an FPTAS (FPRAS respectively), which means that we can approximate it to within any positive factor of approximation $\epsilon$ in time polynomial in $n,1/\epsilon$. So, for a self-reducible problem either there exist an FPTAS (respectively FPRAS), either it is non approximable to any polynomial factor unless P=NP (respectively P=RP).
 
The same holds even for problems with easy decision version. For example there is an FPRAS for $\#$DNF SAT (satisfying assignments of a DNF formula), but it is proved \cite{DGGJ03} that $\#$SAT  can be reduced to $\#$IS (independent sets of all sizes), under an approximation-preserving reduction, so since $\#$SAT is inapproximable unless NP=RP, the same holds for $\#$IS.  

So a second question to ask for such problems, especially if they are inapproximable within a multiplicative error, is whether we can achieve an additive error, for the fraction of accepting solutions over the space of solutions, (e.g. the number of independent sets over the number of all subsets of nodes, or the number of satisfying assignments over $2^n$). Even for problems that admit multiplicative approximation, such an additive approximation algorithm is not comparable to a multiplicative one, in the sense it can give either a better or a worse result, depending on the input. It is better when the number of solutions is big, and worse if the number of solutions is very small.  

We investigate this question and we give a randomized polynomial time algorithm with additive error $\epsilon, \forall\epsilon>0$, for all problems/functions in TotP.

Another question of interest is the exact (probably exponential) deterministic time of computing or approximating such problems. We show, among other things, that we can have a randomized approximation scheme (i.e. a multiplicative error as small as we want) in time $O(\epsilon^{-2} poly(n)2^{2n/3})$, which is strictly smaller than the exhaustive search solution.

Finally we have the following connection to derandomization and circuit lower bounds. There is a well studied problem, the Circuit Acceptance Probability Problem (CAPP): given a circuit $C$ with $n$ input gates, estimate the proportion of satisfying assignments, i.e. if $p=\frac{\#sat-assignments}{2^n}=\Pr_x[C(x)=1]$, find $\hat{p}=p\pm \epsilon$, for e.g. $\epsilon=1/6$. This is connected to derandomization, and to circuit lower bounds. In particular it is shown by Williams in \cite{Williams10}, that if CAPP can be solved, even non-deterministically, and even in time of order $2^{\delta n}poly(n)$ for some $\delta<1$, then NEXP$\nsubseteq$P/poly. 

We show that our algorithm can be used to solve the Circuit Acceptance Probability Problem (CAPP) in polynomial time and with high probability, for the family of all circuits for which the problems of either (a) counting the number of satisfying assignments, or (b) counting the number of unsatisfying assignments, belong to TotP. For example, CNF formulas belong to this class, as well as other kinds of circuits that we mention. We believe that this fact together with some sharpening and combinations of the proofs in some references that we will mention (in the related work section), will yield interesting, non trivial, circuit lower bounds. We have left the latter for further research.
\subsection{Our Contribution}
Until now, problems in this class are treated individually, and algorithms are designed based on the specific characteristics of each problem. We instead explore what can be done if we exploit their two basic, common structural characteristics, which are self-reducibility and easy decision version. 

Based on these properties, we present a randomized algorithm which achieves for every $\epsilon$, an additive $\epsilon$ approximation to the quantity $p=f(x)/2^{n'}$, with running time polynomial in the size of the input and in $1/ \epsilon$ (precisely $O(\epsilon^{-2})$), where $n'$ is the amount of non-determinism for that problem. This is the best we could expect in the sense that any multiplicative approximation would imply NP=RP. We also show that for many interesting problems $n'=n+constant$ and so all results hold with $n'$ substituted with $n$ (i.e. the size of the input).

Our algorithms relies on non-uniform sampling, by a Markov chain that goes back and forth in the internal nodes of the computation tree of the corresponding NPTM on the given input, (see below for details on the basic ideas).

We also show that for any function $f$ in this class we can decide deterministically if $f(x)\leq g(x)$ in time $g(x)\cdot poly(n)$, where $n$ is the size of input $x$.

We also show the following results, concerning exponential time approximability.
Our algorithm can be viewed as computing $f(x)$ with an absolute error $\epsilon\cdot 2^n, \forall\epsilon>0,$ so by setting $\epsilon$ accordingly, we get an absolute error of order $2^{(\beta+1)n/2}$ in time of order $2^{(1-\beta) n}poly(n),\forall\beta\in(0,1)$. We also show that we can have in time of order $(2^{(\beta+1)n/2}+2^{(1-\beta) n})poly(n), \forall\beta\in(0,1)$ and polynomial in $\epsilon^{-2}$ time, approximation scheme, (i.e. we can get a multiplicative error $\epsilon,\forall \epsilon>0).$ All these running times are better than the exhaustive search solutions.

Then we show how our algorithm can solve with high probability, in polynomial time, and for every $\epsilon$, the Circuit Acceptance Probability Problem, for all polynomial size circuits, for which the problems of either (a) counting the number of satisfying assignments, or (b) counting the number of unsatisfying assignments, belong to TotP, e.g. DNF formulas, CNF formulas, Monotone circuits, Tree-monotone circuits, etc.

 Concerning improvements and extensions, we have that for TotP this algorithm is the best we can achieve unless NP=RP , and we have also that this kind of approximation is impossible to be extended to $\#$P unless NEXP$\nsubseteq$P/poly. If any of these conjectures holds, our results can be viewed as a possible step towards proving that.

\subsection{The basic ideas}
A key element in our proof relies on a fact proved in \cite{PZ06} that the karp-closure of self-reducible problems with easy decision version coincides with the counting class TotP. This is the class of functions $f$ in $\#$P for which there exist a polynomial-time non-deterministic Turing Machine M, such that for every input $x$, $f(x)$ equals the total number of leaves of the computational tree of M, minus $1$. Moreover M can always be s.t. its computation tree is binary (although not full). So $f(x)$ also equals the number of internal nodes of this tree.

Our first idea is the following. Instead of trying to count the number of accepting paths/solutions for the input, try to count the number of internal nodes of the computation tree of the corresponding NPTM M. This approach doesn't take into account any special characteristics of the problem at hand, but only the structural properties we already mentioned. So it can be applied to any problem in this class.

It is worth noting that in this way we reduce a problem whose set of solutions might be of some unknown structure, difficult to determine and understand, to a problem whose 'set of solutions' (the internal nodes of the tree) has the very particular structure of some binary tree of height polynomial in $n=|x|$.

Our second idea is the following. In order to estimate the number of internal nodes of the computation tree, we could try to perform uniform sampling, e.g. with a random walk. However a random walk on a tree, in general needs time exponential in the height of the tree, (polynomial in the number of nodes), and besides that, it can be proved that uniform sampling is impossible unless NP=RP. So instead, we design a Markov Chain converging in polynomial time by construction, but whose stationary distribution, although not uniform, gives us all the information needed to estimate the number of nodes, and thus the value of the function.

\subsection{Related work-Comparisons- Open Questions}
We will give some related work, comparisons to our results, and open questions.
\subsubsection{On Counting Complexity}
Counting complexity started by Valiant in \cite{Valiant79} where he defined $\#$P and showed that computing the Permanent is  $\#$P-complete under Cook reductions. For a survey on Counting Complexity see chapter 17 in \cite{AB09}. 

As shown by Zachos et.al. in \cite{KPZ99}, Cook reductions blur structural differences between counting classes, so several classes inside $\#$P have been defined and studied. $\#$PE was defined in \cite{Pagourtzis01} by Pagourtzis as the problems in $\#$P with decision version in P, TotP was introduced in \cite{KPSZ98}(see the definition in the preliminaries section), and in \cite{PZ06} was shown that it coincides with the Karp-closure of self reducible problems in $\#$PE. Other classes related to TotP, with properties, relations, and completeness results, were studied e.g. in \cite{KPSZ98, Pagourtzis01, PZ06, KPSZ2001, BGPT, FFS91, SST95, HHKW05}.

Concerning the approximability of problems in TotP, no unified approach existed until now. Problems are studied individually, and for some of them it is shown that FPRAS exist(i.e. multiplicative approximation within any factor, in time polynomial in the size of the input and in the inverse of the error), e.g. for counting satisfying assignments of DNF formulas \cite{KLM89}, and for counting perfect matchings \cite{JS96perm}, while for other it is proved that are inapproximable unless NP=P (or NP=RP for the randomized case), e.g. $\#$IS: counting independent sets of all sizes in a graph \cite{DFJ02,Wei06}. Collections on relevant results, proofs and references can be found (among other things) in e.g. \cite{Goldreich, AB09,SinclairNotesMC, Vazirani, Snotes}.

Two significant papers are related to our work. Firstly, Sinclair et.al. in \cite{JS89} showed that for self reducible problems, FPRAS is equivalent to uniform sampling, and that a polynomial factor approximation implies FPRAS. So problems in TotP either have an FPRAS, either are inapproximable within a polynomial factor. Secondly, in \cite{DGGJ03} Goldberg et. al. defined approximation preserving reductions, and classified problems according to their (multiplicative) approximability. They also showed (among other things) that $\#$IS, which is in TotP, is under approximation preserving reductions interreducible to \#SAT, which is considered inapproximable, since its decision version is NP-complete. Also in \cite{B11} is shown by Bordewich that there exist an infinite number of approximability levels between the polynomial factor and the approximability of \#SAT, if NP$\neq$RP.

So for TotP, polynomial-factor multiplicative error in polynomial time is impossible unless NP=RP. Our results show that we can have (a) time strictly smaller than brute force for a RAS, and (b) polynomial time for additive error.

We have to note here that the result about the RAS do not extend to \#SAT through the reduction of Goldberg in \cite{DGGJ03} that we mentioned earlier, because the reduction maps a formula on $n$ variables, to a graph on $n^2$ vertices. However the additive error results extend to \#SAT for CNF formulas, through a reduction to DNF that preserves the number of variables.

As for the question whether we can have such an additive approximation for the whole $\#$P, we can't rule out this possibility, but as we will see in the discussion on the connections with circuit lower bounds, we know that this is impossible unless NEXP $\nsubseteq$ P/poly.

Another interesting open question is to find inside TotP a structural characterization of the class of problems that admit an FPRAS, i.e. to find what is the significant common property that makes them approximable.

\subsubsection{On Exponential Time Complexity}
The study of the exponential time complexity of problems in NP has started by Impagliazzo et al in \cite{IPZ01} where they showed exponential hardness results. For $\#$k-SAT there have been some algorithms in the literature.

In \cite{St85} Stockmeyer proved that polynomial time randomized algorithms exist, with access to a $\Sigma_2P$ oracle. 
In \cite{Tr16} Taxler proved randomized, constant-factor approximation algorithms, of time exponential but smaller than $2^n\cdot poly(m,n)$, where $n$ is the number of variables and $m$ the number of clauses of the input, provided that k-SAT (decision version) is solvable in $O(2^{cn} m^d)$ for some $0<c<1$ and $d\geq 1$ (i.e. provided that the ETH conjecture is false).
In \cite{T12} Thurley gives approximation scheme i.e. $\forall \epsilon>0$ can achieve  multiplicative error $\epsilon$ in time $O^*(\epsilon^{-2}c_k^n)$ for $c_k<2$.
In \cite{IMR12} Impagliazzo et al gives randomized exact counting in time $O^*(2^{(1-\frac{1}{30k})n})$.
There is also an algorithm, without theoretical guarantee, in \cite{GSS06} by Gomes et al. that implements Stockmeyer's idea with some SAT solver, with outstanding performance.

Since $\#3$-SAT is $\#$P-complete, all of TotP can be reduced to that, and so we can achieve such approximations too, using the above algorithms. 

However since counting is in general harder than decision, it is meaningful to explore the exponential time complexity of counting even for problems with decision in P, since as we saw already, they might be inapproximable in polynomial time. 

Our algorithm for TotP is better than all the above, in the sense that we can have an approximation scheme in $O(\epsilon^{-2}2^{\gamma n})$ time, for all $\gamma\in(2/3,1)$, and without call to any oracle, and without any unproven assumptions. Impagliazzo's algorithm is better than ours in the sense that it gives exact counting, and it is worse in running time. Note of course that through a reduction to 3-SAT, if the number of variables of the resulting formula is more than $n$+constant, (where $n$ would be the size of the input of the original problem) the above algorithms perform even worse w.r.t. $n$.

We note again that deterministic/ randomized polynomial time approximation scheme for TotP does not exist, unless P=NP/ NP=RP respectively . It is an open problem whether we can have something in time superpolynomial and subexponential like $n^{\log n}$.    
    
\subsection{On Circuit Lower Bounds}

Excellent surveys on circuit complexity, as was the state of the art until 2009, can be found in
\cite{BS90,AB09}. 
Afterwards, progress has been made by Williams in \cite{Wil11}, where he proved ACC circuit lower bounds for NEXP and E$^{NP}$, by finding improved algorithms for the circuit class ACC. His work was based on ideas first presented in \cite{Williams10} where he proved connections between circuit lower bounds and improved algorithms for Circuit-SAT. 

There he also proved connections between solving the Circuit Acceptance Probability Problem and circuit lower bounds. If CAPP can be solved for all circuits of polynomial size, even non-deterministically, and even in time of order $2^{\delta n}poly(n)$ for some $\delta<1$, then NEXP$\nsubseteq$P/poly. 
He also proved in \cite{Wil11, Will13} that for any circuit family ${\cal C}$ closed under composition of circuits, improved SAT algorithms imply $E^{NP} \nsubseteq {\cal C}$.  

The CAPP problem was first defined and studied in relation to derandomization and circuit lower bounds in \cite{Bar02,For01, IKW02,KC99, KRC00}. In particular in \cite{IKW02} was shown that solving the CAPP in subexponential nondeterministic time infinitely often, implies NEXP$\nsubseteq$P/poly.

We solve the CAPP in polynomial time, with high probability, for the family of all polynomial size circuits, for which the problems of (a) counting the number of satisfying assignments, or (b) counting the number of unsatisfying assignments, belong to TotP (e.g. for CNF formulas). We believe that this result together with some combinations of the proofs in the above references, can yield non-trivial lower bounds.

\section{Preliminaries}

We assume the reader is familiar with basic notions from computational complexity, like a non-deterministic Turing Machine, a boolean circuit, a CNF formula (formula in conjunctive normal form), a DNF (disjunctive normal form) formula, and the classes NP, P, RP, NEXP, EXP$^{NP}$, P/poly, $\#$P, FP. For definitions see e.g. \cite{AB09}. We also assume familiarity with some basics on Markov chains, e.g. the notion of mixing time, and the stationary distribution. See e.g. \cite{Peres}.  

We also keep the following conventions regarding the kinds of error for a value $f$. Absolute error $a$: $f\pm a$, additive error $a$: $\frac{f}{2^n}\pm a$, multiplicative error $a$: $(1\pm a)f$.

\begin{definition}
$\#$P is the class of functions $f:\{0,1\}^*\rightarrow \mathbb{N}$ for which there exists a non deterministic polynomial time Turing machine (NPTM) $M_f$ s.t. the number of accepting paths of $M_f$ on input $x$ equals $f(x)$. 

$\#$PE is the class of functions $f$ in $\#$P for which the decision version, i.e. the problem of deciding if $f(x)>0$, is in P.

TotP is the class of functions $f:\{0,1\}^*\rightarrow \mathbb{N}$ for which there exists a non deterministic polynomial time Turing machine (NPTM) $M_f$ s.t. the number of all computation paths of $M_f$ on input $x$ equals $f(x)+1$. 
\end{definition}

Note that in the definition of TotP we take into account all paths, not only accepting paths like in $\#$P.  $M_f$ doesn't need to return yes or no, but it can return anything, or just halt.

\paragraph{Important Observation}
It is proved in \cite{KPSZ2001} that if for some function there exists an NPTM of the kind described in the above definition for TotP, then for the same function there exists another NPTM with the same properties, with the additional property that the non-deterministic choices at each (non determinisitc) step are exactly $2$. We will call such an NPTM 'binary'. Observe that in this case, the computation tree has $f(x)$ internal nodes, or 'branchings', since it is binary. This fact is extremely crucial for our proofs.

TotP is a subclass of $\#$P. For a relation/problem $A$ in NP we will call 'decision version' the problem of deciding if there exist an accepting computation of some NPTM deciding problem $A$, and we will call 'counting version' the problem of counting accepting computations. For problems/functions $f$ in $\#$P, or in TotP, we will call 'decision version, the problem of deciding if $f(x)\neq 0$.

It is proved in \cite{PZ06} that TotP is exactly the Karp-closure of self reducible problems in $\#$PE, under the following notion of self reducibility.

\begin{definition}
A function $f : \Sigma^*\rightarrow \mathbb{N}$ is called poly-time self-reducible if there exist polynomials 
$r$ and $q$, and polynomial time computable functions $h : \Sigma^*\times\mathbb{N} \rightarrow \Sigma^*$, 
$g :\Sigma^*\times\mathbb{N} \rightarrow \mathbb{N}$, and $t :\Sigma^*\rightarrow\mathbb{N}$ such that for all 
$x\in\Sigma ^*$:\\
(a) $f(x) = t(x) +\sum_{i=0}^{r(|x|)} g(x,i)f(h(x,i))$, that is, $f$ can be processed recursively by reducing $x$ to $h(x,i)$ ($0 \le i\le r(|x|)$),  \\
(b) the recursion terminates after at most polynomial depth (that is, $f\big(h(...h(h(x,i_1),i_2)...,i_{q(|x|)})\big)$ can be computed in polynomial time).\\
(c) $|h(...h(h(x,i_1),i_2)...,i_{q(|x|)}|\in\mathcal{O}\big(poly(|x|)\big)$.
\end{definition}

Intuitively a function $f$ is self reducible if $f(x)$ can be efficiently reduced to computing $f(x_i)$ for some other instances $x_i$, with the condition that if we continue the same procedure recursively, the resulting recursion tree (whose nodes are the respective instances) will be of polynomial height.

Note that we will refer to this recursion tree as the 'self reducibility tree'. 

For example circuit satisfiability problems are self reducible under this notion. The number of solutions (i.e. satisfying assignments) of $C$ equals the number of solutions of $C_1$, which is $C$ with its first input gate fixed to $1$, plus the number of solutions of $C_0$, which is $C$ with the first input gate fixed to $0$.

Of course circuit satisfiability is not in $P$ (as far as NP$\neq$P), so its counting version is not in TotP.

To understand the definitions better, we will give another example of a problem in TotP, show that it is self reducible, and give the corresponding NPTM (whose number of paths on input $x$ equals $f(x)+1$). 

The problem is $\#IS$: given a graph G on n nodes, $f(G)$ is the number of independent sets of all sizes. Clearly $f$ is in TotP, as a single node is always an independent set, and the self reducibility tree can be defined as follows. $f(G)$ equals the number of independent sets containing node $1$ plus the number of those not containing $1$, so $f(G)$ is reduced to $f(G_0)+f(G_1)$, where $G_0$ is $G$ with node $1$ and its neighbourhood removed, and $G_1$ is G with node $1$ removed. We do that recursively for all sub-instances that occur. So the height of the self reducibility tree is $n$. The corresponding NPTM proceeds as follows. In each step $i$ it checks whether for the corresponding sub-instances $f(G^i_0)$ and $f(G^i_1)$ is not zero, and if both of them are, then it branches (i.e. it proceeds non deterministically), else it proceeds deterministically to that sub-instance $G^i_b$ for which $f(G^i_b)>0$, if such exists, else it halts. Finally, in order to have in total $f(G)+1$ leaves (or, equivalently, computation paths),  in the end of the whole computation, it makes one more branching in the rightmost path (the one that has no "left" choice in any level).

Note that in this case, the computation tree is exactly the same as the self reducibility tree, with one more branching at the right end. And clearly the number of non deterministic bits used by the NPTM is at most the height of the self reducibility tree plus one. This is because $f(x)$ results as a simple addition of $f$ on two sub-instances. But this is not always the case, as the definition of self reducibility is more general. On the other hand this is the case for many problems defined on graphs and circuits, like counting satisfying assignments of monotone circuits, and of DNF formulas.

\section{Approximability of TotP}
As we saw in the preliminaries section, the Karp closure of self reducible problems in $\#PE$ equals the class TotP. Since the number of all paths of a (not necessarily full) binary tree, minus one, equals the number of internal nodes of that tree, to compute a function in TotP, it suffices to compute the number of branchings of the computation tree of the corresponding NPTM. 

For a problem $f$ in TotP, on input $x$, it is easy to check whether a state of some computation of the corresponding NPTM is a branching, as follows. We can associate each internal state with the string of non-deterministic choices made to reach that state. 
Given such a string, we simulate the NPTM M with these non-deterministic choices, until M either has to make another non-deterministic choice, either it halts. In the first case we consider that state as a 'branching', in the second as a 'leaf'.

Thus, the problem of counting branches of such an NPTM in time polynomial in $|x|$, reduces to the problem of counting nodes of a subtree $S$ of the full binary tree $T$ of height $n$, containing the root of $T$ (if S is not empty), in time polynomial in $n$, where $S$ is given implicitly by some oracle or poly-time predicate that tells us for every node of $T$ if it belongs to $S$.

\begin{lemma} For any $f\in$TotP, on input $x$, computing $f(x)$ in time $poly(|x|)$ is reduced to counting nodes of a subtree $S$ of the full binary tree $T$ of height $n=poly(|x|)$, containing the root of $T$ (if S is not empty), in time polynomial in $n$, where $S$ is given implicitly by some oracle or poly-time predicate, that tells us for every node of $T$ if it belongs to $S$.
\end{lemma}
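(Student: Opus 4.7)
The plan is to invoke the Important Observation preceding the lemma, which guarantees that for any $f\in$TotP there is a binary NPTM $M_f$ whose computation tree on input $x$ has exactly two choices at every non-deterministic step and whose total number of paths equals $f(x)+1$. Since $M_f$ runs in polynomial time, the height of its computation tree is bounded by some polynomial $n=n(|x|)$, and so the tree naturally embeds into the full binary tree $T$ of height $n$: identify each node with the binary string encoding the sequence of non-deterministic choices ($0$ for left, $1$ for right) taken to reach it from the root.

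Next I would define $S$ to be the set of strings $v\in\{0,1\}^{\leq n}$ that correspond to branching states of $M_f$, that is, strings such that when we simulate $M_f$ on $x$ while consuming the bits of $v$ at successive non-deterministic steps, the machine arrives after exactly $|v|$ choices at a configuration poised to make another non-deterministic choice. Because $M_f$ is binary, the internal nodes of the computation tree are in bijection with $S$; and a binary tree with $f(x)+1$ leaves has exactly $f(x)$ internal nodes, so $|S|=f(x)$. The set $S$ is closed under taking prefixes, so it is a subtree of $T$, and it contains the root of $T$ (the empty string) whenever $f(x)>0$.

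The membership predicate is polynomial-time computable essentially as described in the paragraph just before the statement: on input $v$, simulate $M_f$ on $x$ step by step, following the non-deterministic choices dictated by $v$; after exactly $|v|$ non-deterministic steps, declare $v\in S$ iff the simulation is about to make another non-deterministic choice, and $v\notin S$ otherwise (in particular if the simulation has already halted or if $v$ prescribes more choices than the machine can take). Since $M_f$ halts in $poly(|x|)$ steps, this test runs in $poly(n)$ time, which is exactly the access to $S$ required by the lemma. The main obstacle, such as it is, lies in matching the root of the computation tree with the root of $T$: this is a purely syntactic normalization, handled by contracting any deterministic prefix of $M_f$ into its initial step so that the initial configuration is itself a branching (if $f(x)>0$). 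With these identifications in place, computing $f(x)$ in $poly(|x|)$ time is literally the problem of counting the elements of $S$ in $poly(n)$ time, completing the reduction.
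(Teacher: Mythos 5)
Your proposal is correct and takes essentially the same route as the paper: use the binary NPTM guaranteed by the Important Observation, identify each branching state with its string of non-deterministic choices so the branchings form a prefix-closed subtree $S$ of the full binary tree $T$, note $|S|=f(x)$ because a full binary tree with $f(x)+1$ leaves has $f(x)$ internal nodes, and realize the membership predicate by direct simulation. The paper leaves the prefix-closure and the root-normalization implicit, and you spell them out, but the argument is the same.
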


We are going to give a probabilistic algorithm that  given such a predicate for some subtree $S$, approximates the size of $S$ in time $poly(n)$. It is based on a rapidly mixing Markov chain on the nodes of $S$. We will first present the Markov chain and prove its mixing time and its stationary distribution. Then we will show how we can approximate the size of $S$, using the Markov chain for sampling from its stationary distribution.

\subsection{The Markov Chain}

We define a Markov chain, having as states the nodes of a subtree of the full binary tree.

\begin{definition}\label{the_chain} Let $S$ be a subtree of the fully binary tree $T$ of height $n$, containing the root of $T$. We define the Markov chain $P$ over the nodes of $S$, with the following  transition probabilities. \\$p(i,j)=1/2$ if $j$ is the parent of $i$, \\$p(i,j)=1/4$ if $j$ is a child of $i$, \\$p(i,j)=0$ for every other $j\neq i$, and \\$p(i,i)=1-\sum_{j\neq i}p(i,j)$.
\end{definition}

\begin{proposition}\label{the_stationary}
The stationary distribution of the above Markov chain $P$ is as follows. If $d_i$ is the depth of node $i$, i.e. its distance from the root, and $n$ the height of the tree, $\forall i, \pi(i)= \alpha 2^{n-d_i}$, where $\alpha$ is a normalizing factor, so that $\sum_{i}\pi(i)=1.$ 
\end{proposition}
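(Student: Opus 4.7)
The plan is to verify that $\pi$ is stationary by checking the detailed balance equations, which for a reversible chain imply stationarity. Specifically, I would show that for every pair of states $i, j$ in $S$,
\[
\pi(i)\, p(i,j) = \pi(j)\, p(j,i).
\]
Since $p(i,j) = 0$ whenever $i$ and $j$ are distinct and not in a parent-child relation, the identity is trivial for all such pairs, and the case $i = j$ is automatic. So the only nontrivial case to handle is when $j$ is the parent of $i$ in $S$ (equivalently, $i$ is a child of $j$).

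In that case $d_j = d_i - 1$, so $\pi(j) = \alpha \cdot 2^{n-d_j} = 2\alpha \cdot 2^{n-d_i} = 2\pi(i)$. The transition probabilities are $p(i,j) = 1/2$ (parent move) and $p(j,i) = 1/4$ (child move), so
\[
\pi(i)\, p(i,j) = \tfrac{1}{2}\pi(i) \quad\text{and}\quad \pi(j)\, p(j,i) = 2\pi(i)\cdot \tfrac{1}{4} = \tfrac{1}{2}\pi(i),
\]
and detailed balance holds. Summing $\pi(i)\,p(i,j) = \pi(j)\,p(j,i)$ over all $i$ then gives the stationarity equation $\sum_i \pi(i)\, p(i,j) = \pi(j)$, so $\pi$ is a stationary distribution of $P$.

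To conclude that this is the stationary distribution, I would invoke uniqueness via irreducibility: since $S$ is a subtree of $T$ containing the root, it is connected, and every parent-child edge of $S$ is traversable in both directions with positive probability in $P$. Hence the chain is irreducible on $S$, so the stationary distribution is unique up to the normalization constant $\alpha$, which is fixed by the requirement $\sum_i \pi(i) = 1$.

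I do not foresee any genuine obstacle: the whole statement is essentially a one-line detailed-balance computation, with the key observation that the ratio $\pi(\text{parent})/\pi(\text{child}) = 2$ exactly compensates for the asymmetry $p(\text{parent}\to\text{child})/p(\text{child}\to\text{parent}) = 1/2$. The only subtlety worth flagging is that the self-loop probabilities $p(i,i)$ differ across nodes (the root, internal nodes of $S$ with one or two children in $S$, and leaves of $S$ all give different values), but detailed balance is insensitive to self-loops, so this does not affect the verification.
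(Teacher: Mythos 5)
Your proof is correct and follows essentially the same route as the paper, which simply asserts "it is easy to check that $\sum_i \pi(i)p(i,j)=\pi(j)$"; you supply the natural way to perform that check, namely verifying detailed balance on parent--child pairs and noting that reversibility plus irreducibility gives stationarity and uniqueness. The one observation worth adding is that your detailed-balance computation is in fact what underlies the paper's later claim that the chain is time reversible, which is used in the mixing-time bound, so your more explicit version is a strict improvement in exposition.
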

\begin{proof}
It is easy to check that $\sum_{i}\pi(i)p(i,j)=\pi(j)$ 
\end{proof}

Now we will prove that $P$ is rapidly mixing, i.e. polynomial in the height of the tree $S$. The intuition is the following. The simple random walk on a tree needs time polynomial in the size of the tree, which in the worst case of a fully binary tree, it is exponential in the height of the tree. The reason is that it is difficult to go from a leaf to the root, since the probability of going downwards the levels of the tree, is double the probability of going upwards. So we designed a walk such that, on the full binary tree, the probabilities of going upwards equals the probability of going downwards. So its easy to see that the mixing time equals the time of convergence to the uniform distribution over the levels of the tree, thus polynomial to the height of the tree. (Of course what we loose is that the new walk, as we saw, does not converge to the uniform distribution over the nodes of $S$, as is the case for the simple random walk, and this is the reason we cannot get an FPRAS with this approach.)

It turns out that this Markov chain converges quickly even in the general case.
There are many ways to prove the mixing time formally, and we present one of them. We will use the following  lemma from \cite{JS89}. 

Let $\{X_t\}_{t\geq 0}$ be a Markov chain over a finite state space $\cal{X}$ with transition probabilities $p_{ij}$, $p_x^{(t)}$ be the distribution of $X_t$ when starting from state $x$, $\pi$ be the stationary distribution,  $\tau_x(\epsilon)=\min\{t:||p_x^{(t)}-\pi ||\leq\epsilon\}$ be the mixing time when starting from state $x$. An ergodic Markov chain is called time reversible if $\forall i,j\in{\cal X}, p_{ij}\pi_i=p_{ji}\pi_j $. Let $H$ be the underlying graph of the chain for which we have an edge with weight $w_{ij}=p_{ij}\pi_i=p_{ji}\pi_j$ for each $i,j\in\cal{X}$. A Markov chain is called lazy if $\forall i\in{\cal X},p_{ii}\geq \frac{1}{2} .$ In \cite{JS89} the conductance of a time reversible Markov chain is defined, as follows $\Phi(H)=\min\frac{\sum_{i\in Y,j\notin Y}w_{ij}}{\sum_{i \in Y}\pi_{i}}$, where the minimum is taken over all $Y\subseteq {\cal X}$ s.t. $0<\sum_{i\in Y}\pi_i\leq\frac{1}{2}.$

\begin{lemma}
\cite{JS89} For any lazy, time reversible Markov chain \[\tau_{x}(\epsilon)\leq  const \times \left[ \frac{1}{\Phi(H)^2}(\log\pi_x^{-1}+\log\epsilon^{-1}) \right].\]
\end{lemma}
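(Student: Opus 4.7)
The plan is to deduce this mixing time bound from two standard ingredients: a spectral representation of the convergence rate, and a Cheeger-type inequality relating the spectral gap to the conductance $\Phi(H)$.

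First I would set up the spectral picture. Because the chain is reversible with respect to $\pi$, the transition matrix $P$ is self-adjoint on $L^2(\pi)$, the Hilbert space with inner product $\langle f,g\rangle_\pi=\sum_{x} \pi_x f(x)g(x)$. Hence $P$ has an orthonormal basis of real eigenvectors $v_0,\dots,v_{N-1}$ with real eigenvalues $1=\lambda_0>\lambda_1\ge\cdots\ge\lambda_{N-1}\ge -1$. Laziness ($p_{ii}\ge 1/2$) guarantees $\lambda_{N-1}\ge 0$, so $\lambda_*:=\max_{i\ge 1}|\lambda_i|=\lambda_1$. Expanding the indicator of the starting state in this basis and using that $v_0\equiv 1$ in $L^2(\pi)$, a direct computation gives the pointwise bound
\[
\bigl|p_x^{(t)}(y)-\pi(y)\bigr|\le \lambda_1^{t}\sqrt{\tfrac{\pi(y)}{\pi(x)}},
\]
and summing this yields $\|p_x^{(t)}-\pi\|_{\mathrm{TV}}\le \tfrac{1}{2}\lambda_1^t\,\pi_x^{-1/2}$. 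So any bound of the form $\lambda_1\le 1-\gamma$ already forces
\[
\tau_x(\epsilon)\le \frac{1}{\gamma}\bigl(\tfrac{1}{2}\log\pi_x^{-1}+\log\epsilon^{-1}+O(1)\bigr).
\]

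Second, I would prove the Cheeger-type inequality $1-\lambda_1\ge \Phi(H)^{2}/2$. The variational characterization gives $1-\lambda_1=\min_f \mathcal{E}(f,f)/\mathrm{Var}_\pi(f)$, where $\mathcal{E}(f,f)=\tfrac{1}{2}\sum_{i,j}w_{ij}(f(i)-f(j))^2$ is the Dirichlet form and the minimum runs over non-constant $f$. The classical Jerrum--Sinclair/Cheeger argument takes an eigenfunction $f$ for $\lambda_1$, shifts and truncates it so that $\pi(\mathrm{supp}\,f)\le 1/2$, applies Cauchy--Schwarz to the telescoping identity $f(i)^2-f(j)^2=(f(i)-f(j))(f(i)+f(j))$ summed along an ordering by value, and uses the definition of $\Phi(H)$ to bound level-set cuts from below. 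The output is the desired $1-\lambda_1\ge \Phi(H)^{2}/2$.

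Combining the two gives $\tau_x(\epsilon)\le \mathrm{const}\cdot\Phi(H)^{-2}\bigl(\log\pi_x^{-1}+\log\epsilon^{-1}\bigr)$, as claimed. The step I expect to do the real work is the Cheeger inequality: the convergence-from-spectral-gap estimate is routine linear algebra once reversibility is used, whereas extracting a quadratic lower bound on $1-\lambda_1$ from the linear isoperimetric quantity $\Phi(H)$ is where the nontrivial combinatorial argument (ordering vertices by eigenfunction value, Cauchy--Schwarz across level sets, and recognizing the numerator as $\mathcal{E}(f,f)$) has to be carried out carefully; laziness is used here to avoid a competing lower bound from $\lambda_{N-1}$ close to $-1$.
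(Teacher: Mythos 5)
The paper does not prove this lemma at all; it states it with a citation to Jerrum and Sinclair \cite{JS89}. Your reconstruction---reversibility giving a real spectrum, laziness forcing $\lambda_{N-1}\ge 0$ so that $\lambda_1$ controls convergence, a spectral bound on the distance to stationarity carrying the $\pi_x^{-1/2}$ factor, and the Cheeger-type inequality $1-\lambda_1\ge \Phi(H)^2/2$---is exactly the argument underlying that reference and is correct. One small expository caveat: the total-variation bound $\|p_x^{(t)}-\pi\|_{\mathrm{TV}}\le\tfrac12\lambda_1^t\pi_x^{-1/2}$ does not follow by literally summing the pointwise bound $|p_x^{(t)}(y)-\pi(y)|\le\lambda_1^t\sqrt{\pi(y)/\pi_x}$ over $y$, since that would leave a spurious $\sum_y\sqrt{\pi(y)}$ factor; the clean route is the $L^2$ estimate
\[
4\|p_x^{(t)}-\pi\|_{\mathrm{TV}}^2\;\le\;\sum_y\frac{\bigl(p_x^{(t)}(y)-\pi(y)\bigr)^2}{\pi(y)}\;=\;\sum_{k\ge 1}\lambda_k^{2t}v_k(x)^2\;\le\;\frac{\lambda_1^{2t}}{\pi_x},
\]
which gives the stated inequality directly.
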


\begin{proposition}
The mixing time of $P$, when starting from the root, is polynomial in the height of the tree $n$.
\end{proposition}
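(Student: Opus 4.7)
The plan is to invoke the conductance bound stated above (the lemma from \cite{JS89}) after two preparations. First, I would replace $P$ by its lazy version $P':=\tfrac{1}{2}(I+P)$, which preserves the stationary distribution $\pi$ and ensures $p'_{ii}\ge 1/2$ everywhere; this is needed because an internal node whose two children both lie in $S$ has $p_{ii}=0$ in the original chain. Second, time-reversibility with respect to $\pi$ follows from a one-line check: for a parent--child pair $(p,c)$ with $d_c=d_p+1$, both $p'_{pc}\pi(p)$ and $p'_{cp}\pi(c)$ equal $\alpha\,2^{n-d_p-3}$. After these steps, the lemma reduces the task to showing $\Phi(H)\ge 1/\mathrm{poly}(n)$.

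To bound $\Phi(H)$ I would use the canonical-paths method. For each ordered pair $u,v\in S$, let $\gamma_{uv}$ be the unique tree path from $u$ up to $\mathrm{LCA}(u,v)$ and then down to $v$; its length is at most $2n$. Fix a tree edge $e=(p,c)$ with $d_c=d_p+1$ and let $S_c$ be the subtree of $S$ rooted at $c$. The paths $\gamma_{uv}$ that use $e$ are precisely those joining $S_c$ to its complement, so the per-edge congestion satisfies $\rho(e)\le 2n\cdot 2\,\pi(S_c)/w_e$. The key universal estimate, which handles the fact that $S$ is an arbitrary subtree of the full binary tree $T$, is
\[
\pi(S_c)\;\le\;\pi(T_c)\;=\;\sum_{k=0}^{n-d_c}2^{k}\cdot\alpha\,2^{n-d_c-k}\;=\;(n-d_c+1)\,\alpha\,2^{n-d_c},
\]
where $T_c$ is the full subtree of $T$ rooted at $c$. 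Plugging this in together with $w_e=\alpha\,2^{n-d_p-3}$ yields $\rho(e)=O(n^{2})$, uniformly in $e$. The standard implication $\Phi\ge 1/(2\rho)$ then gives $\Phi(H)\ge 1/O(n^{2})$.

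Finally, apply the cited lemma with $x$ the root. Since $\sum_i 2^{n-d_i}\le (n+1)\,2^{n}$, the normalizing constant satisfies $\alpha\ge 1/((n+1)\,2^{n})$, hence $\pi(\mathrm{root})=\alpha\,2^{n}\ge 1/(n+1)$ and $\log\pi_{\mathrm{root}}^{-1}=O(\log n)$. The lemma then gives $\tau_{\mathrm{root}}(\epsilon)=O\bigl(n^{4}\log(n/\epsilon)\bigr)$, polynomial in $n$ as claimed. The main obstacle throughout is the irregular shape of $S$: exact per-level counts and exact $\pi$-weights of subtrees of $S$ cannot be computed and depend on how $S$ was carved out of $T$; the canonical-paths approach circumvents this cleanly by using the universal upper bound $\pi(S_c)\le\pi(T_c)$, a quantity depending only on $d_c$.
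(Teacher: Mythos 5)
Your proof is correct, and it takes a genuinely different route from the paper. Both proofs use the same Jerrum--Sinclair conductance lemma and both pass to the lazy chain and verify reversibility via the identity $w_{pc}=\alpha 2^{n-d_p-3}$, but the way conductance is bounded differs. The paper bounds $\Phi(H)$ directly by enumerating the possible shapes of a cut $Y$ with $\pi(Y)\le 1/2$ (a connected subtree not containing the root of $S$, a union of such subtrees with nested or incomparable roots, a set containing the root, and the general case), in each case comparing the boundary weight of $Y$ to the edge leading from the relevant subtree roots to their parents, and comparing $\pi(Y)$ to the weight of the corresponding full binary subtrees. You instead set up canonical paths along the unique tree geodesics; the cut edge analysis is then localized to a single parent--child edge $(p,c)$, and the whole case enumeration collapses into the single universal estimate $\pi(S_c)\le\pi(T_c)=(n-d_c+1)\alpha 2^{n-d_c}$, which is in fact the very same comparison-to-a-full-subtree bound the paper uses inside each of its cases. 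Your version is arguably cleaner because the cut structure of a tree is implicit in the canonical paths rather than argued case by case.

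One small inefficiency worth flagging: you define $\rho(e)$ to include the path-length factor $2n$ and then invoke $\Phi\ge 1/(2\rho)$, but that implication is the one for the \emph{length-free} congestion $\sum_{\gamma_{uv}\ni e}\pi(u)\pi(v)/w_e$. Feeding the larger, length-weighted quantity into it is still a valid (just weaker) lower bound on $\Phi$, so nothing breaks, but it costs you an extra factor of $n$: dropping the $2n$ gives $\rho(e)=O(n)$, hence $\Phi\ge 1/O(n)$ and $\tau_{\mathrm{root}}(\epsilon)=O(n^2\log(n/\epsilon))$, matching the paper's $1/\Phi(H)\le 4(n+1)$, rather than the $O(n^4\log(n/\epsilon))$ you state. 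Either bound suffices for the proposition as stated, which only asks for polynomial mixing time.
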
  
\begin{proof} First of all, we will consider the lazy version of the Markov chain, i.e. in every step, with probability $1/2$ we do nothing, and with probability $1/2$ we follow the rules as in definition \ref{the_chain}. The mixing time of $P$ is bounded by the mixing time of its lazy version. The stationary distribution is the same. The Markov chain is time reversible, and the underlying graph is a tree with edge weights $w_{uv}=\pi_u p_{uv}=2^i\alpha\times \frac{1}{8}=2^{i-3}\alpha,$ if we suppose that $u$ is the father of $v$ and $2^i\alpha$ is the probability $\pi_u$.

Now it suffices to show that $1/\Phi(H)$ is polynomial in $n$. 

Let ${\cal X}$ be the set of the nodes of $S$, i.e. the state space of the Markov chain $P$. We will consider all possible $Y\subseteq {\cal X}$ with $0\leq \pi(Y)\leq 1/2.$ We will bound the quantity $\frac{\sum_{i\in Y,j \notin Y}w_{ij}}{\sum_{i \in Y}\pi_{i}}.$

If $Y$ is connected and does not contain the root of $S$, then it is a subtree of $S$, with root let say u, and $\pi_u=\alpha 2^k$ for some $k\in \mathbb{N}.$  We have 
\[\sum_{i\in Y, j\notin Y} w_{ij}\geq w_{u,father(u)}=2^{k-2}\alpha.\]
Now let $Y'$ be the full binary tree with root $u$ and height the same as $Y$, i.e. $k$. We have
\[\sum_{i\in Y} \pi_i \leq \sum_{i\in Y'}\pi_i= \sum_{j=0}^{k}2^{k-j}\alpha\times 2^j=2^k(k+1)\alpha\leq 2^k(n+1)\alpha \] where this comes if we sum over the levels of the tree $Y'.$
So it holds
\[\frac{\sum w_{ij}}{\sum \pi_i}\geq \frac{2^{k-2}\alpha}{2^k(n+1)\alpha}=\frac{1}{4(n+1)}\]

If $Y$ is the union of two subtrees of $S$, not containing the root of $S$, and the root of the first is an ancestor of the second's root, then the same arguments hold, where now take as $u$ the root of the first subtree.

If $Y$ is the union of $\lambda$ subtrees not containing the root of $S$, for which it holds that no one's root is an ancestor of any other's root, then we can prove a same bound as follows. Let $Y_1,...Y_{\lambda}$ be the subtrees, and $k_1,k_2,...,k_{\lambda}$ be the respective probabilities of the roots of them in the stationary distribution. Then as before
\[\sum w_{ij}\geq 2^{k_1-2}\alpha+2^{k_2-2}\alpha+...+2^{k_{\lambda}-2}\alpha\] and
\[\sum_{i\in Y} \pi_{i}=\sum_{j=1...\lambda}\sum_{i\in Y_j} \pi_i\leq 2^{k_j}(n+1)\alpha\] thus
\[\frac{\sum w_{ij}}{\sum \pi_i}\geq \frac{\alpha\sum_{j=1...\lambda} 2^{k_j-2}}{(n+1)\alpha \sum_{j=1...\lambda} 2^{k_j}}=\frac{1}{4(n+1)}.\]

If $Y$ is a subtree of $S$ containing the root of $S$, then the complement of $Y$, i.e. $S\setminus Y$ is the union of $\lambda$ subtrees of the previous form. So if we let $Y_i,k_i$ be as before, then
\[\sum w_{ij}=\alpha\sum_{j=1...\lambda} 2^{k_j-2}\] and since from hypothesis $\pi(Y)\leq 1/2$, we have
\[\sum_{i\in Y}\pi_i\leq\sum_{i\in S\setminus Y} \pi_i\leq (n+1)\alpha\sum_{j=1...\lambda} 2^{k_j}\]
thus the same bound holds again.

Finally, similar arguments imply the same bound when $Y$ is an arbitrary subset of $S$ i.e. an arbitrary union of subtrees of $S$.

In total we have
$1/\Phi(H)\leq 4(n+1).$
\end{proof}

Note that this result implies mixing time quadratic in the height of the tree, which agrees with the intuition for the full binary tree, that it should be as much as the mixing time of a simple random walk over the levels of the tree, i.e. over a chain of length $n$.

Before going on with the approximation algorithm, we will prove two properties of this Markov chain, useful for the proofs that will follow.

\begin{lemma}
Let $R$ be a binary tree of height $n$, and let $\alpha_R$ be the normalizing factor of the stationary distribution $\pi_{R}$ of the above Markov chain. It holds $\alpha_R^{-1}\leq (n+1)2^n,$ and $\pi_R(root)\geq \frac{1}{n+1}$
\label{propertiesOfP}
\end{lemma}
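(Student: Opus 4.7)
The plan is to reduce both bounds to a single calculation: explicitly evaluating the normalization sum for $\pi_R$ by grouping nodes according to their depth in the tree.

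First I would start from the defining condition of the stationary distribution, namely $\sum_{i \in R} \pi_R(i) = 1$. Substituting $\pi_R(i) = \alpha_R 2^{n - d_i}$ this gives the identity
\[
\alpha_R^{-1} = \sum_{i \in R} 2^{n - d_i},
\]
so the entire task is to bound this sum from above. I would partition the nodes of $R$ by depth, letting $L_d$ denote the nodes at depth $d$, and rewrite
\[
\alpha_R^{-1} = \sum_{d = 0}^{n} |L_d| \cdot 2^{n - d}.
\]
Since $R$ is a subtree of the full binary tree of height $n$, we have $|L_d| \leq 2^d$ for every $d$, so each term in the sum is at most $2^n$. Summing over the at most $n+1$ depth levels yields $\alpha_R^{-1} \leq (n+1) 2^n$, which is the first claim.

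The second claim is essentially a corollary. The root has depth $0$, so $\pi_R(\text{root}) = \alpha_R \cdot 2^n$, and combining with $\alpha_R \geq \frac{1}{(n+1)2^n}$ gives $\pi_R(\text{root}) \geq \frac{1}{n+1}$ immediately.

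I do not expect any real obstacle here; the only thing to be careful about is that the bound $|L_d| \leq 2^d$ uses that $R$ sits inside the full binary tree of height $n$, and that the sum is taken over depths $d = 0, \ldots, n$ (not larger), which follows from the height bound stated in the lemma. Everything else is a one-line normalization computation.
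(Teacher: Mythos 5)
Your proof is correct and follows essentially the same route as the paper: both start from the normalization condition $\sum_i \pi_R(i)=1$, group nodes by depth, bound the number of nodes at depth $d$ by $2^d$, and read off the second claim from $\pi_R(\mathrm{root})=\alpha_R 2^n$.
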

\begin{proof}
Let $r_i$ be the number of nodes in depth $i$.
\[1=\sum_{u\in S}\pi_R(u)=\sum_{i=0}^n\sum_{u\ in\ level\ i}\pi_R(u)=\sum_{i=0}^n r_i\alpha_R\cdot 2^{n-i}
\Rightarrow \frac{1}{\alpha_R}=\sum_{i=0}^n r_i\cdot 2^{n-i}\] which is maximized when the $r_i$'s are maximized, i.e. when the tree is full binary, in which case $r_i=2^i$ and $\alpha_R^{-1}=(n+1)2^n.$ This also implies that for the root of  $R$ it holds $\pi_R(root)=\alpha_R\cdot 2^n\geq \frac{1}{n+1}.$ 
\end{proof}

\subsection{The approximation algorithm}

Let $S,T$ be as before.
We will prove that we can approximate the number of nodes of $S$ using the previous Markov chain. The key idea is that much of the information we need is in the normalizing factor $\alpha$, and although the stationary distribution is far from being uniform over the nodes of $S$, $\alpha$ is fully determined from the probability of the root, since we proved it to be  $2^n\alpha$, where n is the height of $S$. 

Let $\pi_S$ denote the probability distribution over the nodes of $S$, as defined in proposition \ref{the_stationary}, and let $\alpha_S$ denote the associated normalizing factor.

First we will show how we can compute exactly the number of nodes of $S$ if we could somehow (e.g. with an oracle, or an algorithm) know the normalizing factor $\alpha_R$ for any subtree $R$ of $T$ containing $T$'s root.

Then we will give an approximation algorithm that relies on approximating all these factors by sampling from the stationary distribution of the Markov chain described before, and estimating the probability of the root, and from that, the corresponding $\alpha_{S_i}$.

Finally we give the total error of our algorithm.

\begin{proposition} \label{sizeOfS}
Let $S$ be a binary tree of height $n$, and $\forall i=0...n,$ let $S_i$ be the subtree of  $S$ that contains all nodes up to depth $i$, and let $\alpha_{S_i}$ be the factors defined as above. Then  
\[|S|=\frac{1}{\alpha_{S_n}}-\sum_{k=0}^{n-1}\frac{1}{\alpha_{S_k}}\]
\end{proposition}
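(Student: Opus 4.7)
The plan is purely computational: unfold the normalizing factors $\alpha_{S_i}$ using Proposition~\ref{the_stationary}, swap the order of summation, and observe a telescoping. The only conceptual point to get right is that $S_i$ has height $i$ (not $n$), so when we apply Proposition~\ref{the_stationary} to $S_i$, the stationary distribution reads $\pi_{S_i}(v)=\alpha_{S_i}\,2^{i-d_v}$ for every node $v\in S_i$ at depth $d_v$.

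Let $r_k$ denote the number of nodes of $S$ at depth exactly $k$; since $S_i$ contains exactly those nodes of $S$ at depth $\le i$, the depth-$k$ level of $S_i$ also has size $r_k$ for every $k\le i$. Summing $\pi_{S_i}$ over $S_i$ and setting the total equal to $1$ gives
\[
\frac{1}{\alpha_{S_i}} \;=\; \sum_{v\in S_i} 2^{i-d_v} \;=\; \sum_{k=0}^{i} r_k\, 2^{i-k}.
\]
This is the same computation already used in Lemma~\ref{propertiesOfP}, just carried out on the truncation $S_i$ instead of on $S$ itself. I would state this as a one-line preliminary identity.

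Plugging it into the claimed formula, the right-hand side becomes
\[
\sum_{k=0}^{n} r_k\, 2^{n-k} \;-\; \sum_{i=0}^{n-1}\sum_{k=0}^{i} r_k\, 2^{i-k}.
\]
I would then swap the order of summation in the double sum, getting $\sum_{k=0}^{n-1} r_k \sum_{i=k}^{n-1} 2^{i-k} = \sum_{k=0}^{n-1} r_k\,(2^{n-k}-1)$ by the standard geometric-series identity $\sum_{i=k}^{n-1} 2^{i-k} = 2^{n-k}-1$.

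Subtracting, the $2^{n-k}$ terms cancel for $k=0,\dots,n-1$, leaving $r_n\cdot 2^0$ from the first sum and $+\sum_{k=0}^{n-1} r_k$ from the $(-1)$ contributions, which totals $\sum_{k=0}^{n} r_k = |S|$. There is no real obstacle here — the proof is essentially a change-of-summation-order and a geometric sum — so the main thing to be careful about is bookkeeping: distinguishing the height parameter of the chain on $S_i$ from the height $n$ of $S$, and making sure the inner sum starts at $i=k$ (not $i=0$) after swapping.
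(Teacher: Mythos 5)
Your proof is correct. It takes a genuinely different (and in my view cleaner) route than the paper's. You start from the same identity $\frac{1}{\alpha_{S_i}}=\sum_{k=0}^{i} r_k\,2^{i-k}$, but then you simply substitute it into the right-hand side of the claimed formula, swap the order of the double sum $\sum_{i=0}^{n-1}\sum_{k=0}^{i}$ into $\sum_{k=0}^{n-1}\sum_{i=k}^{n-1}$, and collapse the inner geometric series to $2^{n-k}-1$, after which everything cancels down to $\sum_{k=0}^n r_k = |S|$. The paper instead first establishes, by induction on $k$, the local recurrence $r_k=\frac{1}{\alpha_{S_k}}-\frac{2}{\alpha_{S_{k-1}}}$ (with base case $r_0=\frac{1}{\alpha_{S_0}}$), and then sums this recurrence over $k$, telescoping to the same conclusion. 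Both are elementary; your direct computation avoids the induction entirely and makes the mechanism (geometric sum plus cancellation) visible in one pass, while the paper's route exposes the intermediate identity for $r_k$, which is a reusable fact in its own right (it says how to read off the level sizes from consecutive normalizing constants). Your bookkeeping is correct, in particular the point that $S_i$ has height $i$ so $\pi_{S_i}(v)=\alpha_{S_i}2^{i-d_v}$, and that the inner sum after swapping starts at $i=k$.
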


\begin{proof}
For $i=1,...,n$ let $r_i$ be the number of nodes in depth $i$. So 
$|S|=r_0+...+r_n.$

Obviously if $S$ is not empty,
\begin{equation}
r_0=1=\frac{1}{\alpha_{S_0}}.\label{r0}
\end{equation}

We will prove that $\forall k=1...n$
\begin{equation}\label{rk} r_k=\frac{1}{\alpha_{S_k}}-2\frac{1}{\alpha_{S_{k-1}}},
\end{equation}
 so then $|S|=\frac{1}{\alpha_{S_0}}+\sum_{k=1}^n(\frac{1}{\alpha_{S_k}}-2\frac{1}{\alpha_{S_{k-1}}})=\frac{1}{\alpha_{S_n}}-\sum_{k=0}^{n-1}\frac{1}{\alpha_{S_k}}.$
 
We will prove claim (\ref{rk}) by induction.

For $k=1$ we have 
\[\sum_{u\in S_{1}}\pi_{S_{1}}(u)=1\Rightarrow
\alpha_{S_1}\cdot r_1+2 \alpha_{S_1}\cdot r_0=1\Rightarrow
r_1=\frac{1}{\alpha_{S_1}}-2 r_0=\frac{1}{\alpha_{S_1}}-2\frac{1}{\alpha_{S_0}}.\]

Suppose claim (\ref{rk}) holds for $k<i\leq n.$ We will prove it holds for $k=i.$

\[\sum_{u\in S_{i}}\pi_{S_{i}}(u)=1\Rightarrow
\sum_{k=0}^i 2^{i-k} \alpha_{S_i}\cdot r_k =1 \Rightarrow
r_i=\frac{1}{\alpha_{S_i}}-\sum_{k=0}^{i-1}2^{i-k} r_k\]
and substituting $r_k$ for $k=0,...,i-1$ by (\ref{r0}) and (\ref{rk}), we get
$r_i=\frac{1}{\alpha_{S_i}}-2\frac{1}{\alpha_{S_{i-1}}}.$ 
\end{proof}

\begin{corollary}
If we have an oracle, or a poly($n$) predicate that for any subtree $R$ gives the factor $\alpha_R$ defined as above, then we can compute exactly the number of nodes of any tree $S$ of height $n$ in poly($n$) time.
\end{corollary}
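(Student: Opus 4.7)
The plan is to apply Proposition \ref{sizeOfS} directly, treating each of the $n+1$ quantities $\alpha_{S_k}$ as an oracle call. First I would observe that since $S$ itself is specified only implicitly by a poly-time predicate $\chi_S$ that tells us, for any node $v$ of the full binary tree $T$, whether $v\in S$, the truncated subtrees $S_k$ are equally well specified: the predicate $\chi_{S_k}(v):= \chi_S(v)\wedge (d_v\le k)$ is computable in polynomial time, using the obvious fact that the depth $d_v$ of a node of $T$ of height $n$ is readable from its description in time $\mathrm{poly}(n)$. So the hypothesised oracle can be queried on any $S_k$ by handing it the (description of the) predicate $\chi_{S_k}$.

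Next, I would call the oracle $n+1$ times, obtaining $\alpha_{S_0},\alpha_{S_1},\ldots,\alpha_{S_n}$, and then output
\[
|S| \;=\; \frac{1}{\alpha_{S_n}}\;-\;\sum_{k=0}^{n-1}\frac{1}{\alpha_{S_k}},
\]
which is exactly the identity established in Proposition \ref{sizeOfS}. Correctness is immediate from that proposition, and the running time is $\mathrm{poly}(n)$ because we make $n+1=\mathrm{poly}(n)$ oracle calls, each on a predicate of size $\mathrm{poly}(n)$, followed by $O(n)$ rational arithmetic operations on the returned values.

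The only genuinely delicate point, and the one I expect to be the main (minor) obstacle, is bit-complexity: the reciprocals $1/\alpha_{S_k}$ are positive integers (in fact each equals $\sum_{i=0}^k r_i^{(k)} 2^{k-i}$, where $r_i^{(k)}$ is the number of nodes of $S_k$ at depth $i$), and by Lemma \ref{propertiesOfP} they are bounded by $(n+1)2^n$, so each has bit-length $O(n)$; consequently the arithmetic in the displayed formula can be carried out exactly in time $\mathrm{poly}(n)$. Provided the oracle returns $\alpha_{S_k}$ in a representation from which $1/\alpha_{S_k}$ is readable in polynomial time (which is automatic under any reasonable encoding, since each $\alpha_{S_k}$ is rational with numerator $1$ and integer denominator of bit-length $O(n)$), the whole procedure runs in polynomial time, completing the proof of the corollary.
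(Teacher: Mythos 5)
Your proof is correct and follows the same (and essentially the only) route the paper intends: the corollary is stated without a separate proof precisely because it is a direct consequence of Proposition \ref{sizeOfS}, namely querying the oracle on each truncated subtree $S_0,\dots,S_n$ and plugging the returned $\alpha_{S_k}$ into the displayed identity. Your additional care with the bit-complexity of the reciprocals $1/\alpha_{S_k}$ (bounded by $(n+1)2^n$ via Lemma \ref{propertiesOfP}) is a harmless and reasonable detail the paper leaves implicit.
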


Now we can estimate $\alpha_R$ for any tree $R$ of height $n$, within $(1+\zeta)$ for any $\zeta>0$, with high probability, and in polynomial time, using the Markov chain over the nodes of $R$, given in Definition \ref{the_chain}.

\begin{proposition}\label{estimOfaR}
For any binary tree $R$ of height $n$ we can estimate $\alpha_R$, within $(1\pm\zeta)$ for any $\zeta>0$, with probability $1-\delta$ for any $\delta>0$, in time $poly(n,\zeta^{-1},\log\delta^{-1})$.
\end{proposition}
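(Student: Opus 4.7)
The plan is to exploit the identity $\alpha_R = \pi_R(\text{root})/2^n$, which follows directly from Proposition \ref{the_stationary} applied to the root (depth $0$). Hence, estimating $\alpha_R$ multiplicatively reduces to estimating $\pi_R(\text{root})$ multiplicatively. The key enabling fact is Lemma \ref{propertiesOfP}, which guarantees $\pi_R(\text{root}) \geq \frac{1}{n+1}$, so this probability is only inverse-polynomially small in $n$ and can therefore be estimated by a polynomial number of samples via a Chernoff bound.

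First I would describe the estimator: simulate the chain $P$ of Definition \ref{the_chain}, starting from the root, for $t$ steps (to be chosen below) to obtain one sample, repeat this independently $N$ times, and let $\hat{p}$ be the fraction of samples that land on the root; then set $\hat\alpha_R := \hat{p}/2^n$. If the samples were drawn exactly from $\pi_R$, then $\mathbb{E}[\hat p] = p := \pi_R(\text{root})$, and by the multiplicative Chernoff bound
\[
\Pr\bigl[|\hat p - p| > \tfrac{\zeta}{2} p\bigr] \;\leq\; 2\exp\!\bigl(-\tfrac{\zeta^2 p N}{12}\bigr),
\]
so using $p \geq \frac{1}{n+1}$, the choice $N = \Theta\bigl((n{+}1)\,\zeta^{-2}\log\delta^{-1}\bigr)$ makes this bound at most $\delta$.

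Next I would bound the mixing error. Let $\nu_t$ be the distribution after $t$ steps starting at the root and set $p_t := \nu_t(\text{root})$; then $|p_t - p| \leq \|\nu_t - \pi_R\|_{\mathrm{TV}}$. Applying the previous proposition on mixing time together with the Jerrum--Sinclair lemma cited above, and using $\log \pi_R(\text{root})^{-1} \leq \log(n{+}1) = O(n)$, the mixing time to reach total variation distance $\epsilon'$ from stationarity is $O\bigl(n^2(n + \log \epsilon'^{-1})\bigr)$. Choosing $\epsilon' = \frac{\zeta}{4(n+1)}$ forces the mixing bias into $|p_t - p| \leq \frac{\zeta}{4} p$, so that the combined error $|\hat p - p|$ is at most $\zeta p$ with probability $\geq 1 - \delta$, whence $\hat\alpha_R = (1\pm\zeta)\alpha_R$. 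The total running time is $N \cdot t = \mathrm{poly}(n,\zeta^{-1},\log\delta^{-1})$, as required.

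The only delicate step is the interplay between the two sources of error, and this is the place I would expect the main obstacle: one must budget the mixing-time parameter $\epsilon'$ so that the additive bias $|p_t - p|$ is small relative to $p$ itself (not merely small in absolute terms), which is why the lower bound $p \geq 1/(n+1)$ from Lemma \ref{propertiesOfP} is used twice --- once to control the bias, and once to make the Chernoff sample complexity polynomial. Everything else is a routine combination of the rapid-mixing result proved above with standard concentration.
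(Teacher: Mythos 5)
Your proof is correct and follows essentially the same route as the paper: estimate $\alpha_R$ via $\alpha_R = \pi_R(\text{root})/2^n$, use Lemma \ref{propertiesOfP} to lower-bound $\pi_R(\text{root})$ by $1/(n+1)$, sample via the Markov chain, and count how often the root is hit. The two minor differences are that you obtain $1-\delta$ confidence directly from a multiplicative Chernoff bound with $N = \Theta((n+1)\zeta^{-2}\log\delta^{-1})$ samples, whereas the paper first achieves confidence $3/4$ via a variance (Chebyshev-style) argument and then boosts with the median trick; and that you make explicit the budgeting of the mixing error (choosing $\epsilon' = \zeta/(4(n+1))$ so the total-variation bias stays small \emph{relative} to $p$), which the paper dismisses with a one-line remark that the deviation from stationarity ``can be negligible and be absorbed into $\zeta$.'' Both probability-amplification methods are standard and equivalent for the stated purpose, and your treatment of the mixing bias is the more careful of the two.
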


\begin{proof}
Let $R$ be a binary tree of height $n$. We can estimate $\alpha_R$ as follows.

As we saw, $\pi_R(root)=2^n \alpha_R$, and we observe that this is always $\geq\frac{1}{n+1}$ (which is the case when $R$ is full binary). So we can estimate $\pi_R(root)$ within  $(1\pm\zeta)$ for any $\zeta>0$, by sampling $m$ nodes of $R$ according to $\pi_R$ and taking, as estimate, the fraction $\hat{p}=\sum_{i=1}^m\frac{1}{m}X_{i}$, where $X_i=1$ if the $i$-th sample node was the root, else $X_i=0.$

It is known by standard variance analysis arguments that we need $m=O(\pi_R(root)\cdot \zeta^{-2})=poly(n,\zeta^{-1})$ to get \[\Pr [(1-\zeta)\pi_R(root)\leq \hat{p}\leq (1+\zeta)\pi_R(root)]\geq\frac{3}{4}\]

We can boost up this probability to $1-\delta$ for any $\delta>0$, by repeating the above sampling procedure $t=O(\log\delta^{-1})$ times, and taking as final estimate the median of the $t$ estimates computed each time.

(Proofs for the above arguments are elementary in courses on probabilistic algorithms or statistics, see e.g. in \cite{Snotes} the unbiased estimator theorem and the median trick, for detailed proofs.)

The random sampling  according to $\pi_R$ can be performed by running the Markov chain defined earlier, on the nodes of $R$. Observe that the deviation $\epsilon$ from the stationary distribution can be negligible and be absorbed into $\zeta$, with only a polynomial increase in the running time of the Markov chain.

Finally, the estimate for $\alpha_R$ is $\hat{\alpha_R}=2^{-n}\hat{p}$, and it holds
\[\Pr[(1-\zeta)\alpha_R\leq \hat{\alpha_R}\leq (1+\zeta)\alpha_R]\geq 1-\delta.\]   
\end{proof}

The final algorithm for estimating $|S|$ is as follows. We estimate $\alpha_{S_i}$ for every subtree of $S$ and we get an estimate of the size of $S$ using proposition \ref{sizeOfS}.

\begin{proposition}\label{main}
For all $\xi>0,\delta>0$ we can get an estimate $|\hat{S}|$ of $|S|$ in time $poly(n,\xi^{-1},\log\delta^{-1})$ s.t. \[\Pr[|S|-\xi 2^n\leq |\hat{S}|\leq |S|+\xi 2^n]\geq1-\delta\]
\end{proposition}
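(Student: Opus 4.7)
The plan is to combine Proposition \ref{sizeOfS} with Proposition \ref{estimOfaR} in the natural way: estimate each of the $n+1$ normalizing factors $\alpha_{S_0}, \alpha_{S_1}, \ldots, \alpha_{S_n}$ to sufficiently high relative accuracy, then plug the estimates into the identity
\[|S| = \frac{1}{\alpha_{S_n}} - \sum_{k=0}^{n-1}\frac{1}{\alpha_{S_k}}\]
and bound the propagated error. The key quantitative input is Lemma \ref{propertiesOfP}, which says $1/\alpha_R \le (n+1)2^n$ for every binary subtree $R$ of height at most $n$ containing the root.

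I would first fix the tolerance that each individual estimate must meet. If Proposition \ref{estimOfaR} is invoked to produce $\hat\alpha_{S_k}$ with $\hat\alpha_{S_k} \in (1\pm\zeta)\alpha_{S_k}$, then taking reciprocals yields $1/\hat\alpha_{S_k} \in (1\pm \zeta')/\alpha_{S_k}$ for some $\zeta' = O(\zeta)$ (valid for $\zeta$ bounded away from $1$), and therefore
\[\left|\frac{1}{\hat\alpha_{S_k}} - \frac{1}{\alpha_{S_k}}\right| \;\le\; \zeta' \cdot \frac{1}{\alpha_{S_k}} \;\le\; \zeta'(n+1)2^n.\]
Summing $n+1$ such per-term additive errors, the total additive error in the estimate $\widehat{|S|}$ obtained from the formula is at most $c\,\zeta\,(n+1)^2\, 2^n$ for an absolute constant $c$. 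To make this at most $\xi\cdot 2^n$ it suffices to choose $\zeta := \xi / \bigl(c(n+1)^2\bigr)$, which is still $1/\text{poly}(n,\xi^{-1})$.

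For the confidence parameter, I would apply Proposition \ref{estimOfaR} to each $\alpha_{S_k}$ with failure probability $\delta/(n+1)$ and take a union bound over the $n+1$ subtrees, so that with probability at least $1-\delta$ all estimates are simultaneously within their target accuracy, and hence the combined bound above holds. Each call runs in time $\text{poly}(n, \zeta^{-1}, \log((n+1)/\delta)) = \text{poly}(n,\xi^{-1},\log\delta^{-1})$, and there are $n+1$ calls, so the overall running time is of the claimed form. The oracle/predicate that tells the Markov chain which nodes of the full binary tree belong to $S_k$ is available (it is just the predicate for $S$ combined with a depth check $\le k$), so Proposition \ref{estimOfaR} applies to each $S_k$.

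The main obstacle, and essentially the only nontrivial step, is the error propagation in the previous paragraph: one must be careful that the worst-case bound $1/\alpha_{S_k}\le (n+1)2^n$ from Lemma \ref{propertiesOfP} is used to convert relative error on $\alpha_{S_k}$ into an additive error on $1/\alpha_{S_k}$ that is controlled by $2^n$ (not by $|S|$), since $|S|$ itself can be exponentially smaller than $2^n$. This is exactly what forces the guarantee to be an additive error of the form $\xi\cdot 2^n$ rather than a multiplicative one, and is consistent with the paper's earlier discussion that a multiplicative (FPRAS-type) guarantee would contradict $\textsc{NP}\ne\textsc{RP}$.
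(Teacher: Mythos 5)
Your proof is correct and matches the paper's approach: estimate each $\alpha_{S_k}$ via Proposition \ref{estimOfaR}, substitute into the identity of Proposition \ref{sizeOfS}, and bound the propagated error using $1/\alpha_R \le (n+1)2^n$ from Lemma \ref{propertiesOfP}. The paper's error accounting is marginally tighter---it writes $|S|=A-B$ with $B\le A$ and bounds the total deviation by $2\zeta A$ directly rather than summing $n+1$ per-term bounds, saving a factor of $(n+1)$ in the choice of $\zeta$---but this is immaterial for the polynomial running time, and your explicit union bound over the $n+1$ estimations (which the paper leaves implicit) is sound.
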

\begin{proof}
Let $\zeta=\frac{\xi}{2(n+1)}$ and $\epsilon=\frac{\zeta}{1+\zeta}$, thus 
$poly(\epsilon^{-1})=poly(\zeta^{-1})$
$=poly(n,\xi^{-1})$. 

So according to proposition \ref{estimOfaR} we have in time $poly(n,\xi^{-1},\log\delta^{-1})$ estimations $\forall i=1,...,n$
\begin{equation}\label{est-aSi} (1-\epsilon)\alpha_{S_i} \leq\hat{\alpha}_{S_i} \leq (1+\epsilon) \alpha_{S_i}.
\end{equation}

We will use proposition \ref{sizeOfS}. Let $A=\frac{1}{\alpha_{S_n}}$ and $B=\sum_{k=0}^{n-1}\frac{1}{\alpha_{S_k}},$ so $|S|=A-B$, and clearly $B\leq A.$

From (\ref{est-aSi}) we have 
$\frac{1}{1+\epsilon}A\leq \hat{A} \leq \frac{1}{1-\epsilon} \Leftrightarrow$
$(1-\zeta)A\leq \hat{A} \leq (1+\zeta) A$ and similarly
$(1-\zeta)B\leq \hat{B} \leq (1+\zeta) B.$ 

Thus 
$(1-\zeta)A-(1+\zeta)B\leq \hat{A}-\hat{B} \leq (1+\zeta)A-(1-\zeta)B \Leftrightarrow$

$A-B-\zeta (A+B) \leq \hat{A}-\hat{B} \leq A-B+\zeta (A+B),$ and since $A\geq B$, we have 

$|S|-2\zeta A\leq |\hat{S}| \leq |S|+2 \zeta A.$ And since from lemma \ref{propertiesOfP} the maximum $A$ is $2^n(n+1)$, we have 

$|S|-2\zeta(n+1)2^n \leq |\hat{S}| \leq|S|+ 2\zeta (n+1) 2^n \Leftrightarrow$

$|S|-\xi \cdot 2^n \leq |\hat{S}|\leq |S|+ \xi \cdot 2^n.$ 

\end{proof}

\begin{corollary}\label{pr-estimation}
Let $p=\frac{|S|}{2^n}.$ For all $\xi>0,\delta>0$ we can get an estimation  $\hat{p}$ in time $poly(n,\xi^{-1},\log\delta^{-1})$ s.t.
\[\Pr[p-\xi\leq\hat{p}\leq p+\xi]\geq1-\delta\] 
\end{corollary}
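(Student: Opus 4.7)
The plan is to derive the corollary by a trivial rescaling of Proposition \ref{main}. Given any $\xi > 0$ and $\delta > 0$, I would simply invoke the estimator of Proposition \ref{main} with these same parameters to obtain $|\hat{S}|$, and then output $\hat{p} := |\hat{S}|/2^n$. The claimed running time $poly(n, \xi^{-1}, \log \delta^{-1})$ transfers verbatim, since the algorithm itself is unchanged; only its output is relabelled.

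For correctness, I would start from the high-probability event delivered by Proposition \ref{main},
\[|S| - \xi \cdot 2^n \;\leq\; |\hat{S}| \;\leq\; |S| + \xi \cdot 2^n,\]
and divide through by $2^n$ to obtain $p - \xi \leq \hat{p} \leq p + \xi$. The probability lower bound $1 - \delta$ is inherited directly, completing the argument.

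I do not foresee any real obstacle: the corollary is pure bookkeeping on top of Proposition \ref{main}. The only point worth checking is that the parameter $\xi$ appearing in Proposition \ref{main} is already the fractional (additive-relative-to-$2^n$) error tolerance, so passing from the absolute guarantee on $|S|$ to the fractional guarantee on $p$ requires no further adjustment of parameters and incurs no additional polynomial factors in $1/\xi$ or any amplification of the failure probability. Combined with the preceding lemma reducing TotP functions to counting nodes of a binary-tree subtree, this corollary is precisely what delivers the paper's headline additive approximation of $f(x)/2^{n'}$ for every function $f$ in TotP.
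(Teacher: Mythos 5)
Your proof is correct and is exactly the argument the paper intends: the corollary is stated immediately after Proposition \ref{main} with no separate proof precisely because it follows by dividing the guarantee $|S| - \xi 2^n \leq |\hat S| \leq |S| + \xi 2^n$ through by $2^n$. The running time and success probability transfer unchanged, as you observe.
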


So since, as we already discussed, every problem in TotP reduces to the above problem of counting nodes of a tree, we proved the following theorem.

\begin{theorem}\label{main-theorem}
For any problem $f\in TotP$, with $M_f$ being its corresponding NPTM (whose total number of computation paths on input $x$ is $f(x)+1$), and with $n'$ being the number of non deterministic bits used by $M_f$ on input $x$,  $\forall \xi>0, \forall x\in\{0,1\}^n$ we can have with heigh probability, in time $O(\epsilon^{-2},poly(n))$ an estimation $\hat{f}(x)=f(x)\pm \xi\cdot 2^{n'}.$  

Also corollary \ref{pr-estimation} holds for $p=f(x)/2^{n'},$ i.e. we can have $\hat{p}=p\pm \xi, \forall \xi>0.$ 
\end{theorem}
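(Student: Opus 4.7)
The plan is to assemble the theorem by chaining together the structural reduction and the approximation machinery already established. First, I would invoke Lemma~1 to pass from an arbitrary $f \in \totp$ to the combinatorial problem of counting the nodes of the implicitly defined subtree $S$ of the full binary tree $T$ of height $n'$, where $n'$ is the number of non-deterministic bits used by the corresponding binary $M_f$ on input $x$. Because $M_f$ can be taken binary (the ``Important Observation'' in the preliminaries), the number of branchings (internal nodes) in its computation tree is exactly $f(x)$, so $|S| = f(x)$. The membership predicate ``is this node a branching of $M_f$?'' is poly-time computable by simulating $M_f$ from the given prefix of nondeterministic choices until it either makes another choice (branching) or halts (leaf).

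Next, I would apply Proposition~\ref{main} to the tree $S$ with height $n'$. This yields an estimator $\widehat{|S|}$ satisfying $\Pr[\,|S|-\xi\cdot 2^{n'}\leq \widehat{|S|}\leq |S|+\xi\cdot 2^{n'}\,]\geq 1-\delta$, in time $poly(n',\xi^{-1},\log\delta^{-1}) = poly(n,\xi^{-1},\log\delta^{-1})$, since $n' = poly(n)$. Setting $\hat{f}(x):=\widehat{|S|}$ directly gives the first claim of the theorem. For the second claim, division by $2^{n'}$ converts the additive-in-$2^{n'}$ error into an additive error $\xi$ on the ratio $p = f(x)/2^{n'}$, which is precisely Corollary~\ref{pr-estimation}.

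The only nontrivial piece beyond bookkeeping is to verify that the running-time dependency on $\epsilon^{-2}$ stated in the theorem (where $\epsilon$ plays the role of $\xi$) is really what comes out of the composition. This traces back to Proposition~\ref{estimOfaR}, whose sample complexity $m = O(\pi_R(\text{root})\cdot \zeta^{-2})$ is quadratic in the inverse accuracy, and whose accuracy parameter $\zeta = \xi/(2(n+1))$ is fixed in the proof of Proposition~\ref{main}; since the Markov chain defined in Definition~\ref{the_chain} mixes in $poly(n)$ steps (Proposition~3) and only $n'+1$ subtrees $S_i$ need to be handled, the total cost stays $O(\epsilon^{-2}\, poly(n))$ after one application of the median trick to drive the failure probability below $\delta$.

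The main obstacle, if any, is conceptual rather than technical: one must be careful that $n$ (the input length) and $n'$ (the nondeterministic width of $M_f$) are not confused, since the additive guarantee is genuinely relative to $2^{n'}$, not $2^n$. Once that distinction is respected, the theorem follows by direct composition of Lemma~1, Proposition~\ref{sizeOfS}, Proposition~\ref{estimOfaR}, and Proposition~\ref{main}, with no further probabilistic or combinatorial argument needed.
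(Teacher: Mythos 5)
Your proposal is correct and follows essentially the same route as the paper: reduce via Lemma~1 (using the binary NPTM observation) to counting nodes of a subtree $S$ of height $n'$, then invoke Proposition~\ref{main} and Corollary~\ref{pr-estimation} directly, with the $\epsilon^{-2}$ dependence traced to the sample complexity in Proposition~\ref{estimOfaR}. The paper states this chain more tersely as an immediate consequence, but the decomposition and the care about $n$ versus $n'$ are exactly as you describe.
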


The above theorem holds with $n$ in place of $n'$, if $n'=n+constant$, as is the case for many problems like counting non-cliques of a graph, counting independent sets of all sizes of a graph, counting non-independent sets of size k, counting satisfying assignments of DNF formulas, counting satisfying assignments of monotone circuits, e.t.c.

\section{Implications to exponential time complexity}
In what follows, let $f$ be a function in TotP, let $M$ be the corresponding NPTM for which $\forall x$ ($\#$branchings of $M(x)$)$=f(x).$ Let also $n$ be the size of the input, or some complexity parameter that we care about (e.g. the number of variables in a boolean formula or circuit), and $n'$ be the amount of non-deterministic bits, that is the height of the computation tree of $M(x)$ (where the internal nodes are the branchings i.e. the positions where $M$ makes a non-deterministic choice). Of course $n'$ is polynomial in $n$. Be careful that $n'$ here is denoted $n$ in proposition \ref{main}, as it is the height of the tree. For the results to have some meaning, we consider functions $s:\mathbb{N} \rightarrow \mathbb{N}$ that are positive, as small as we want, but at most $O(2^n)$.

We give corollaries of the main result.

\begin{corollary}\label{general-corollary}
For all $f\in TotP$, $\forall s:\mathbb{N} \rightarrow \mathbb{N}$, $\forall x\in \{0,1\}^*$, $\forall \delta \in (0,1)$, with probability $1-\delta$, in time $\frac{2^{n'}}{s(n')}poly(n,\log \delta^{-1})$, where $n'$ is as before, we can achieve an estimation $\hat{f}(x)=f(x) \pm 2^{n'/2}s(n')^{1/2}.$ For any $\beta \in (0,1)$, in time $2^{(1-\beta)n'}poly(n,\log \delta^{-1})$, we can achieve $\hat{f}(x)=f(x)\pm 2^{n'(1+\beta)/2}.$
\end{corollary}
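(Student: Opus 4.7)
The plan is to derive both estimates as immediate specializations of Theorem \ref{main-theorem}, which provides, for any $\xi > 0$ and any $\delta \in (0,1)$, an estimate $\hat{f}(x) = f(x) \pm \xi \cdot 2^{n'}$ in time $O(\xi^{-2} \cdot poly(n, \log \delta^{-1}))$ with success probability $1 - \delta$. Both parts reduce to choosing $\xi$ so that the two quantities — the absolute error $\xi \cdot 2^{n'}$ and the reciprocal-squared running-time factor $\xi^{-2}$ — land at the desired tradeoff point.

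For the first claim I would set $\xi = 2^{-n'/2} s(n')^{1/2}$. This is positive as long as $s(n') > 0$, it makes $\xi \cdot 2^{n'} = 2^{n'/2} s(n')^{1/2}$, matching the claimed error bound, and it gives $\xi^{-2} = 2^{n'}/s(n')$, matching the claimed running time. For the second claim, given $\beta \in (0,1)$, I would instead set $\xi = 2^{-(1-\beta)n'/2}$, so that $\xi^{-2} = 2^{(1-\beta)n'}$ and the absolute error becomes $\xi \cdot 2^{n'} = 2^{n' \cdot (1+\beta)/2}$, exactly as stated. In both cases the confidence amplification factor $\log \delta^{-1}$ is inherited unchanged from Theorem \ref{main-theorem}.

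There is no substantive obstacle here; the argument is pure arithmetic on the time/error tradeoff already established in the main theorem, combined with the standing hypothesis that $s(n') > 0$ and $s(n') \leq O(2^{n'})$, which keeps $\xi$ in a legal range and guarantees that the resulting running time does not exceed the exhaustive-search bound. The corollary is best viewed as a repackaging of Theorem \ref{main-theorem}: by trading the parameter $\xi$ for either the shape function $s$ or the exponent $\beta$, one exposes a continuous family of sub-exhaustive-time randomized algorithms whose absolute error shrinks continuously from the trivial $\Theta(2^{n'})$ bound toward $2^{n'/2}$ as more time is invested.
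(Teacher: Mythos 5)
Your proposal is correct and matches the paper's proof essentially step for step: the paper also observes that the running time depends on $\xi$ as $\xi^{-2}$, sets $\xi = \sqrt{s(n')/2^{n'}}$ (identical to your $2^{-n'/2}s(n')^{1/2}$) for the first claim, and obtains the second by taking $s(n') = 2^{\beta n'}$, which yields exactly your $\xi = 2^{-(1-\beta)n'/2}$. The only cosmetic difference is that you invoke Theorem \ref{main-theorem} directly while the paper points back to the variance analysis in Proposition \ref{estimOfaR}; the arithmetic is the same.
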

\begin{proof}
From the proof of proposition \ref{main} and  in particular from the variance analysis arguments in proposition \ref{estimOfaR}, we can see that the actual dependence of the running time on $\xi$ is proportional to $\xi^{-2}$. So we get the first estimation by setting $\xi=\sqrt{\frac{s(n')}{2^{n'}}}$, and the second by setting $s(n')=2^{\beta n'}.$ 
\end{proof}

For the consequent corollaries, we will need the following useful fact.

\begin{theorem}\label{exact-deterministic}
For all $f\in TotP$, $x$, $s$ as before, we can decide deterministically in time $O(s(n) \cdot poly(n))$ whether $f(x)\leq s(n).$
\end{theorem}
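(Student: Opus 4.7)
The plan is to perform a truncated depth-first traversal of the computation tree of $M$, restricted to its internal (branching) nodes, and to stop as soon as the count exceeds $s(n)$. Recall from the discussion preceding Lemma~1 that each internal node can be identified with the finite string of non-deterministic choices leading to it, and given such a string we can test in $poly(n)$ time whether it corresponds to a branching: simulate $M$ with these choices prescribed and see whether $M$ halts or makes another non-deterministic choice. Since the tree of branchings $S$ (a subtree of the full binary tree of height $n'=poly(n)$) is precisely the structure whose size equals $f(x)$, this predicate gives us exactly the oracle needed for a DFS.

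First I would initialize a counter $c := 0$ and run DFS from the root, using an explicit stack of depth at most $n'$. At each node $v$ on the stack we would test whether each of the two children $v\cdot 0$ and $v\cdot 1$ is a branching using the $poly(n)$ simulation of $M$, push them onto the stack in order if so, increment $c$ when a new branching is popped for expansion, and immediately \emph{halt and answer ``$f(x)>s(n)$''} the moment $c$ reaches $s(n)+1$. If DFS terminates with $c\leq s(n)$, we answer ``$f(x)\leq s(n)$'' (and in fact return the exact value $c=f(x)$). Correctness is immediate: $c$ counts exactly the branchings visited so far, and DFS visits every branching of $S$.

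For the running time, note that a DFS that visits $k$ nodes performs $O(k)$ stack operations and $O(k)$ child-membership tests; each such test costs $poly(n)$ by the simulation above, and each stack operation is trivially $poly(n)$ (the string encoding a node has length at most $n'$). If $f(x)\leq s(n)$ the DFS visits at most $s(n)$ branchings before exhausting $S$, and if $f(x)>s(n)$ we stop after visiting $s(n)+1$ of them; in either case the total time is $O(s(n)\cdot poly(n))$, as required.

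The only subtlety I anticipate is the bookkeeping that the ``internal-node'' predicate is really computable in $poly(n)$ uniformly over all strings of length $\leq n'$, and that DFS need not revisit any branching: both follow from the structural description of $M$ used throughout Section~3 (in particular the important observation that $M$ can be taken binary so that every branching has a well-defined pair of children), so no new machinery is needed. There is no real obstacle; the point of the theorem is simply that easy-decision self-reducibility lets us ``generate-and-test'' the branchings one at a time, turning a global counting problem into a local enumeration whose cost scales with the threshold rather than with $2^{n'}$.
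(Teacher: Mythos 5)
Your proof is correct and takes essentially the same approach as the paper: the paper's proof is a one-sentence statement that we perform a BFS or DFS on the computation tree of $M(x)$, counting branchings until either the tree is exhausted or we see $s(n)+1$ of them. You simply flesh out the bookkeeping (the branching-test predicate, the explicit stack, and the $O(s(n)\cdot poly(n))$ accounting) but the underlying argument is identical.
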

\begin{proof}
We perform a bfs or a dfs on the computation tree of $M(x)$ (i.e. we perform exhaustive search by trying all non deterministic choices) until we encounter at most $s(n)+1$ branchings. If the tree is exhausted before that time, then obviously $f(x)\leq s(n)$, else $f(x)>s(n).$
\end{proof}

The next corollary shows that we can have a RAS (randomized approximation scheme) for every problem in TotP, in time strictly smaller than that of exhaustive search. Note that we can't have that in polynomial time, unless NP=RP.

\begin{corollary}
For all $f$, $x$, $s$, $\delta$, $n$, $n'$ as before, and for all $k\in \mathbb{R}$, with probability $1-\delta$ and in time $poly(k,n,\log\delta^{-1})(\frac{2^{n'}}{s(n')}+2^{n'/2}s(n')^{1/2}),$ we can achieve approximation $\hat{f}(x)=f(x)(1\pm \frac{1}{k}).$ 

For every $\beta\in(0,1)$, we can have a RAS in time $poly(k,n,\log \delta^{-1})(2^{(1- \beta)n'}+2^{(1+\beta)n'/2})$.

We can also have uniform sampling in the same amount of time.
\label{ras}\end{corollary}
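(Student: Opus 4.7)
The plan is to combine the additive approximation of Corollary \ref{general-corollary} with the exhaustive-search test of Theorem \ref{exact-deterministic}, using a threshold chosen so that the two regimes dovetail to give a multiplicative error. Set $T := k\cdot 2^{n'/2}s(n')^{1/2}$. First I would run the deterministic DFS of Theorem \ref{exact-deterministic} with cutoff $T$: this takes time $O(T\cdot poly(n))$ and either completely exhausts the computation tree of $M(x)$, in which case it returns the exact value of $f(x)$, or certifies that $f(x)>T$. In the former case we are already done, since the exact answer trivially satisfies the $(1\pm 1/k)$ bound.

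In the latter case, invoke Corollary \ref{general-corollary} with the given $s$. This produces, with confidence $1-\delta$ and in time $(2^{n'}/s(n'))\cdot poly(n,\log\delta^{-1})$, an estimator satisfying $|\hat f(x)-f(x)|\leq 2^{n'/2}s(n')^{1/2}$. Because the deterministic phase has guaranteed $f(x)>T = k\cdot 2^{n'/2}s(n')^{1/2}$, this absolute error is bounded by $f(x)/k$, so $\hat f(x)=(1\pm 1/k)f(x)$ as required. Adding the two running times yields $poly(k,n,\log\delta^{-1})\bigl(2^{n'}/s(n')+2^{n'/2}s(n')^{1/2}\bigr)$, and the explicit RAS bound follows by substituting $s(n'):=2^{\beta n'}$, which balances the summands into $2^{(1-\beta)n'}$ and $2^{(1+\beta)n'/2}$ respectively.

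For the uniform-sampling assertion I would invoke the Jerrum--Valiant--Vazirani paradigm: for self-reducible problems, a randomized approximation scheme for the counting function yields an almost-uniform sampler over the solution set in essentially the same running time. Concretely, walk down the self-reducibility tree of $f$ and at each node use the multiplicative-approximation algorithm just built (with accuracy and confidence tightened by a polynomial factor, to absorb the union bound over the polynomial depth) to estimate the counts of the sub-instances; then recurse into each child with probability proportional to its estimate. The resulting sampler runs within a polynomial factor of the counter, and the total-variation distance from uniform can be driven below any prescribed $\delta$.

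The main obstacle is the choice of the cutoff $T$: it must be small enough for the deterministic phase to complete within the claimed budget, and simultaneously large enough that the additive error $2^{n'/2}s(n')^{1/2}$ of the randomized phase is automatically at most $f(x)/k$ whenever the deterministic phase fails. The choice $T=k\cdot 2^{n'/2}s(n')^{1/2}$ is the unique value at which both constraints bind, after which the argument reduces to bookkeeping on top of Corollary \ref{general-corollary} and Theorem \ref{exact-deterministic}.
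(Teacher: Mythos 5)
Your proof matches the paper's argument essentially line for line: the same threshold $T = k\cdot 2^{n'/2}s(n')^{1/2}$, the same two-phase structure (deterministic DFS via Theorem~\ref{exact-deterministic} for small $f(x)$, followed by the additive estimator of Corollary~\ref{general-corollary} when $f(x)>T$), the same substitution $s(n')=2^{\beta n'}$ for the RAS form, and the same appeal to \cite{JS89} for the uniform-sampling claim. Your added detail on the Jerrum--Valiant--Vazirani self-reducibility walk is a slight elaboration of what the paper leaves as a citation, but it is not a different route.
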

\begin{proof}
First we check deterministically if $f(x)\leq k 2^{n'/2}s(n')^{1/2}$, in which case we get the exact value of $f(x)$. Otherwise, if $f(x) > k 2^{n'/2}s(n')^{1/2}$, we apply the initial algorithm to get $\hat{f}=f(x)\pm 2^{n'/2}s(n')^{1/2}$ which is $< f(x)\pm \frac{1}{k} f(x)=(1\pm \frac{1}{k})f(x).$ 
The running time is a result from theorem \ref{exact-deterministic} and corollary \ref{general-corollary}.

We can also have uniform sampling, since in \cite{JS89} is proved that a randomized approximation scheme can be used for uniform sampling with a polynomial overhead in the running time.
\end{proof}

Note that $n'$ in many cases, like problems on graphs, formulas, circuits etc., equals $n+constant$. Some example is the problem $\#IS$, as we discussed in the preliminaries section in detail.

Similar simple arguments hold for other problems too, so for these problems, since $n'=n+constant$, all the above corollaries hold with $n'$ substituted with $n$.

\begin{corollary}
For problems in TotP  for which $n'=n+constant$, like $\#IS$, and $\#SAT$ for DNF formulas, monotone circuits etc., all the above corollaries hold with $n'$ substituted with $n$.
\end{corollary}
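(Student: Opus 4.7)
The plan is to reduce the corollary to the single verification that, for each problem listed, the natural NPTM $M_f$ can be made to use at most $n + O(1)$ non-deterministic bits on an input whose natural complexity parameter is $n$ (vertices for $\#$IS, variables for $\#$DNF-SAT, input gates for monotone circuit $\#$SAT). Once $n' \le n + c$ is established, Corollaries \ref{general-corollary} and \ref{ras} transfer with $n'$ replaced by $n$ essentially for free: the factors $2^{\beta n'}$, $2^{n'/2}$, $2^{(1-\beta)n'}$, $2^{(1+\beta)n'/2}$ differ from their $n$-versions only by constants of the form $2^{\beta c}$ or $2^{c/2}$, which are absorbed into the $poly(n)$ prefactor.

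The main step is therefore to construct $M_f$ for each problem by instantiating the template already given for $\#$IS in the preliminaries. For $\#$IS on a graph $G$ with $n$ vertices, the preliminaries already exhibit $n' \le n + 1$ using $f(G) = f(G_0) + f(G_1)$ with $G_0, G_1$ the two standard sub-instances, plus one final branching appended at the rightmost path. For $\#$DNF-SAT on a formula $\phi$ with $n$ variables, I would apply the same template with $\phi_b := \phi|_{x_1 = b}$; the easy-decision test reduces to checking whether some clause of $\phi_b$ has not been falsified, which is linear time. For monotone circuit $\#$SAT on a circuit $C$ with $n$ input gates, set $C_b := C|_{x_1 = b}$; by monotonicity, $C_b$ is satisfiable iff evaluating $C_b$ with all remaining inputs set to $1$ produces output $1$, a polynomial-time check. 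In each case the self-reducibility tree has height exactly $n$ and one extra branching is appended, yielding $n' \le n + 1$.

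The final step is purely substitutional. Setting $n' = n + 1$ in Corollaries \ref{general-corollary} and \ref{ras} yields running times of the form $\frac{2^{n}}{s(n)} poly(n)$ and $2^{(1-\beta)n}poly(n)$, and error terms $2^{n/2}s(n)^{1/2}$ and $2^{n(1+\beta)/2}$, each inflated by at most a constant factor. I expect no genuine obstacle: the content lies entirely in the structural check that each listed problem really does have $n' \le n + O(1)$, which in turn rests on the fact that its self-reducibility tree branches once per variable or vertex and that the easy-decision predicate at each internal node is polynomial-time computable. For other problems in TotP not on this list, where a Karp reduction may introduce auxiliary variables and thus inflate $n'$ superlinearly, this substitution would require a separate argument and might fail; this is why the corollary is stated for a restricted family.
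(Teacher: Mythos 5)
Your proposal is correct and follows the same route the paper takes: verify $n' \le n + O(1)$ for each listed problem via its natural self-reducibility tree (the paper spells this out only for $\#$IS in the preliminaries and asserts "similar simple arguments hold" for the rest), then observe that the constant offset is absorbed into the polynomial prefactor of the running times in the earlier corollaries. Your write-up is somewhat more explicit than the paper's (which treats this corollary as an immediate remark), and the closing caveat about reductions inflating $n'$ is a sensible clarification, but the argument is the same; one small terminological slip is that DNF formulas have \emph{terms} (conjunctions), not clauses, so the easy-decision check is that some term remains internally consistent after the partial assignment.
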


We can explore whether we can extend corollary \ref{ras} for problems in $\#$P. One possible way is to find a (possibly of exponential time) approximation preserving reduction from a problem in $\#$P to a problem in TotP s.t. the amount of non-deterministic bits needed for the first doesn't increase too much with the reduction. 

Precisely, if $f$ is in $\#$P with $M_f$ being its corresponding NPTM (whose number of accepting computation paths on input $x$ is f(x)), that uses $n$ non deterministic bits, and $g$ is in TotP with $M_g$ its corresponding NPTM (whose total number of computation paths on input $x$ equals $f(x)+1$), that uses $n'$ non deterministic bits, then we have the following.

\begin{corollary}If there exists an approximation preserving reduction from a problem $f\in\#P$ to a problem $g\in TotP$, s.t. $n'< (3-\gamma)n/2)$, for some $\gamma\in(0,1)$, then for all $x\in\{0,1\}^n$,  $\delta\in (0,1)$, $k\in \mathbb{R}$, with probability $1-\delta$ and in time $t = poly(k,|x|,\log\delta^{-1})(2^{(1-\gamma) n}+2^{(1+\gamma)n/2}),$ we can achieve approximation $\hat{f}(x)=f(x)(1\pm \frac{1}{k}).$ The reduction suffices to be of time $O(t)$, and not polynomial.
\end{corollary}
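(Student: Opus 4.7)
The plan is to compose the given approximation-preserving reduction from $f\in\#P$ to $g\in TotP$ with the randomized approximation scheme for TotP of corollary \ref{ras}, and then use the hypothesis $n'<(3-\gamma)n/2$ to rewrite the resulting running time in terms of $n$ alone.

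First, given input $x\in\{0,1\}^n$, I would compute $y=\mathrm{red}(x)$; by hypothesis this takes time $O(t)$, which already fits the target budget. Because the reduction is approximation preserving, any multiplicative $(1\pm\epsilon)$ estimate of $g(y)$ with $\epsilon=\Theta(1/k)$ yields a multiplicative $(1\pm 1/k)$ estimate of $f(x)$, modulo a polynomial overhead that can be absorbed into the $poly(k,|x|)$ prefactor.

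Next, I would apply corollary \ref{ras} to $g$ on input $y$ with a parameter $\beta\in(0,1)$ to be chosen. By that corollary we obtain the desired $(1\pm 1/k)$ estimate with confidence $1-\delta$ in time
\[
poly(k,|y|,\log\delta^{-1})\,\bigl(2^{(1-\beta)n'}+2^{(1+\beta)n'/2}\bigr),
\]
and since $|y|=poly(|x|)$ the polynomial prefactor rewrites as $poly(k,|x|,\log\delta^{-1})$. Equivalently, unpacking the proof of corollary \ref{ras}, one can run exhaustive search on $M_{g}(y)$ via theorem \ref{exact-deterministic} up to a threshold $T\le 2^{(1+\gamma)n/2}$ and, if $g(y)>T$, fall back on the additive-error approximation of proposition \ref{main} with $\xi = T/(k\cdot 2^{n'})$, which converts the additive error into the desired multiplicative error $1/k$.

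The main obstacle is to choose $\beta$ (equivalently $T$) so that, under the hypothesis $n'<(3-\gamma)n/2$, both exponentials $2^{(1-\beta)n'}$ and $2^{(1+\beta)n'/2}$ are majorised by the target terms $2^{(1-\gamma)n}$ and $2^{(1+\gamma)n/2}$; this reduces to a bookkeeping comparison between the exhaustive-search cost $T\cdot poly$ and the approximation cost $k^{2}\cdot 2^{2n'}/T^{2}\cdot poly$, using the hypothesis $2n'<(3-\gamma)n$ to push the exponent of the second term below $(1-\gamma)n$ (up to the absorbed polynomial factors). The confidence $1-\delta$ is inherited from corollary \ref{ras}, and since the reduction itself is permitted to take time $O(t)$, its cost is subsumed in the final bound, giving $\hat{f}(x)=f(x)(1\pm 1/k)$ in the claimed time.
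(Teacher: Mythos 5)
Your overall plan — push the input through the approximation-preserving reduction, then invoke Corollary~\ref{ras} on $g$ with a well-chosen $\beta$ (equivalently a threshold $T$), and absorb the reduction cost and polynomial overhead — is exactly the route the paper takes; its proof is the one-liner ``apply Corollary~\ref{ras} on $g$ with $\beta \geq 3 - \tfrac{n}{n'}(3-\gamma)$.''

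Where you diverge is precisely the step you defer as ``bookkeeping,'' and that step is not as innocuous as you suggest. Your proposed instantiation takes $T = 2^{(1+\gamma)n/2}$ and asks that the approximation-side cost $2^{2n'}/T^{2} = 2^{2n'-(1+\gamma)n}$ sit below $2^{(1-\gamma)n}$, i.e.\ $2n'-(1+\gamma)n \leq (1-\gamma)n$, which simplifies to $n' \leq n$. But the hypothesis $n' < (3-\gamma)n/2$ is strictly weaker than $n' \leq n$ for $\gamma \in (0,1)$ (since $(3-\gamma)/2 > 1$), so it permits $n' > n$, and your inequality can fail. Thus the claim that ``the hypothesis $2n' < (3-\gamma)n$ pushes the exponent of the second term below $(1-\gamma)n$'' is not justified; that is exactly the nontrivial part, and it is also the only content of the paper's proof. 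The paper resolves it by selecting $\beta \geq 3 - \tfrac{n}{n'}(3-\gamma)$ (note the hypothesis $n' < (3-\gamma)n/2$ is equivalent to this lower bound on $\beta$ being $< 1$, so the choice is admissible), which is a different balance point than the one implied by your $T = 2^{(1+\gamma)n/2}$. You should carry out the substitution of the paper's $\beta$ into $2^{(1-\beta)n'} + 2^{(1+\beta)n'/2}$ and compare it against the target explicitly rather than assert that the comparison works.
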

\begin{proof}
Apply corollary \ref{ras} on $g$ with $\beta\geq 3-\frac{n}{n'}(3-\gamma).$
\end{proof}

Note that we took $n, n'$ to be the number of non deterministic bits, and not the sizes of the inputs, because we want to compare with the running time of the brute force solutions.

\section{Towards circuit lower bounds} 
There are two problems related to our results, that are related to derandomization and circuit lower bounds too. The first one is the Circuit Acceptance Probability Problem (CAPP) where given a boolean circuit with $n$ input gates, and size $n^c$ for some $c$, and it is asked to approximate the probability $p=\Pr_x[C(x)=1]$ within some $\epsilon > 0$, that is to find a $\hat{p}=p\pm \epsilon$. (In fact $\epsilon=1/6$ suffices for the results that follow). The second is the problem where given a circuit that has got either $0$ or $>2^{n-1}$ satisfying assignments, and it is asked if it is satisfiable. We will call it GapCSAT (gap circuit satisfiability).

Their relationship with circuit lower bounds was proved in \cite{IKW02, Williams10}. The CAPP and its relation to derandomization is studied in \cite{Bar02,For01, IKW02,KC99, KRC00}.

\begin{theorem}
\cite{Williams10} Suppose there is a superpolynomial $s(n)$ s.t. for all $c$ there is an $O(2^n \cdot poly(n^c))/s(n)$ nondeterministic algorithm for CAPP on $n variables$ and $n^c$ gates. Then $NEXP \nsubseteq P/poly$. The proof holds even if we replace CAPP with GapCSAT.
\end{theorem}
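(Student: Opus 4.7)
The plan is to follow Williams' proof \cite{Williams10} by contradiction: assume that NEXP $\subseteq$ P/poly and combine this with the fast nondeterministic CAPP (or GapCSAT) algorithm to derive a violation of the nondeterministic time hierarchy theorem.

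First, I would invoke the easy-witness lemma of Impagliazzo--Kabanets--Wigderson \cite{IKW02}: under NEXP $\subseteq$ P/poly, every language $L\in$ NTIME$[2^n]$ with verifier $V$ has the property that, for all but finitely many $x\in L$, some accepting witness (of length $2^{O(n)}$) is the truth table of a Boolean circuit $W$ of size $poly(n)$. This reduces the problem of deciding $L(x)$ to nondeterministically guessing $W$ (only $poly(n)$ bits) and then checking that the truth table of $W$ is accepted by $V$ on input $x$.

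Next, I would encode this check as a CAPP instance. Using a Cook--Levin-style tableau of $V$'s computation, the consistency of each of the $2^n\cdot poly(n)$ tableau cells can be tested by a local $O(1)$-size predicate that depends on the index $i$, a constant number of values of $W$, and a constant number of bits of $x$. This yields a circuit $F_{x,W}$ on $m=n+O(\log n)$ input bits and of size $poly(n)$ with the property that $F_{x,W}$ is a tautology iff the truth table of $W$ is a valid witness. Feeding $\neg F_{x,W}$ to the hypothesized CAPP algorithm runs in nondeterministic time $O(2^m\cdot poly(m^c))/s(m)=2^n\cdot poly(n)/s(n)=o(2^n)$, since $s$ is superpolynomial. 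Combined with the polynomial-time guessing of $W$, this places NTIME$[2^n]\subseteq$ NTIME$[o(2^n)]$, which contradicts the nondeterministic time hierarchy theorem, and so refutes the assumption NEXP $\subseteq$ P/poly.

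For the GapCSAT variant the promise gap must be amplified, since a single falsifying local check would leave $F_{x,W}$ with acceptance probability $1-2^{-m}$, far from the $0$ versus $>1/2$ regime of GapCSAT. I would handle this either by precomposing the tableau encoding with a PCP proof system for NEXP, so that invalid witnesses fail a constant fraction of checks, or by a direct Reed--Muller-style amplification that blows up the soundness error while keeping the circuit size $poly(n)$ and the number of inputs $O(n)$. The main obstacle is precisely this amplification step: the PCP verifier for NEXP with polylogarithmic query complexity has to be composed cleanly with the easy-witness circuit $W$ so that the resulting GapCSAT instance still fits within the $2^n/s(n)$ time budget. Everything else is bookkeeping that follows the outline above.
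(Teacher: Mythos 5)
This theorem is quoted verbatim from Williams \cite{Williams10}; the paper itself gives no proof, so there is nothing internal to compare against. Evaluating your sketch on its own terms, the overall plan (contradiction via the easy-witness lemma and the nondeterministic time hierarchy) is the right one, but there is a real gap in how you invoke CAPP.

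Your circuit $F_{x,W}$, built from the Cook--Levin tableau, is a tautology when $W$ encodes a valid witness and otherwise has at least one falsifying assignment. That means $\Pr_i[\neg F_{x,W}(i)=1]$ is either exactly $0$ or at least $2^{-m}$. A CAPP oracle with \emph{constant} accuracy $\epsilon$ (the paper fixes $\epsilon=1/6$, and Williams's theorem is stated for constant-error CAPP) cannot distinguish acceptance probability $0$ from acceptance probability $2^{-m}$: both yield estimates in $[0,1/6]$. So the soundness-amplification problem you identify for GapCSAT hits the CAPP route just as hard. It is not a side remark but the crux of Williams's argument. The actual proof composes the easy-witness circuit with a PCP for nondeterministic exponential time that has verifier randomness $n+O(\log n)$ and poly$(n)$-size decision predicate, so that a bad guessed circuit is rejected with constant probability; only then does a constant-error nondeterministic CAPP (or GapCSAT) estimator running in $2^{n}/s(n)\cdot\mathrm{poly}(n)$ time place NTIME$[2^n]$ inside NTIME$[o(2^n)]$, contradicting the hierarchy theorem. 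A second point you gloss over: the PCP proof string is itself of exponential length, so one must invoke a succinctness argument (again under NEXP $\subseteq$ P/poly) to guess a polynomial-size circuit encoding that proof as well, and one must keep the PCP's randomness tight to $n+O(\log n)$ so the CAPP call stays within budget. Your remark that the PCP composition is the ``main obstacle'' is accurate, but it applies to both cases, and the claim that the direct tableau encoding already handles CAPP is incorrect.
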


Since a randomized algorithm can be consider as a nondeterministic algorithm, our algorithm yields a solution to CAPP for subclasses of polynomial size circuits, for wich the counting version is in TotP, e.g monotone circuits, DNF formulas, and tree-monotone circuits. (The latter are circuits monotone w.r.t. a partial order whose graph is a tree, and their counting version problem is the basic TotP-complete problem under parsimonious reductions, as shown in \cite{BCPPZ}). The same holds for circuits that can be reduced to circuits  in TotP under additive-approximation preserving reductions, like CNF formulas.

\begin{corollary}
CAPP can be solved with heigh probability (and thus non deterministically) in time $poly(n,\epsilon^{-1}),\forall \epsilon >0,$ for circuits  with $n$ input gates, whose counting version is in TotP, and the height of the corresponding self-reducibility tree is $n+constant.$
\end{corollary}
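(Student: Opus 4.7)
The plan is to observe that CAPP on a circuit $C$ with $n$ input gates is exactly the problem of approximating $p = f(C)/2^n$, where $f(C)$ counts the satisfying assignments of $C$, and then to apply Theorem \ref{main-theorem} to this $f$. By hypothesis $f \in$ TotP, so there is a binary NPTM $M_f$ whose number of computation paths on input $C$ equals $f(C)+1$. The assumption that the self-reducibility tree has height $n + c$ for some constant $c$ is exactly what guarantees that such an $M_f$ can be chosen with $n' \leq n + c$ non-deterministic bits (each level of the self-reducibility recursion contributes at most one branching step in $M_f$, plus the one extra branching on the rightmost path used to calibrate the leaf count).

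Given this, I would apply Corollary \ref{pr-estimation} (or equivalently the second statement of Theorem \ref{main-theorem}) to obtain, for arbitrary $\xi>0$ and $\delta>0$, an estimate $\hat{q}$ satisfying
\[
\Pr\!\left[\,\bigl|\hat{q} - f(C)/2^{n'}\bigr| \leq \xi\,\right] \geq 1-\delta
\]
in time $poly(n,\xi^{-1},\log\delta^{-1})$. Setting $\hat{p} := 2^{n'-n}\hat{q}$, and using $n' = n+c$ with $c$ constant, we get
\[
\bigl|\hat{p} - p\bigr| = 2^{c}\bigl|\hat{q} - f(C)/2^{n'}\bigr| \leq 2^c \xi.
\]
Choosing $\xi := \epsilon/2^c$ then yields $\Pr[|\hat{p}-p|\leq \epsilon] \geq 1-\delta$ in time $poly(n,\epsilon^{-1},\log\delta^{-1})$, which is precisely CAPP with additive error $\epsilon$ and confidence $1-\delta$. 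Taking $\delta$ to be any fixed constant (e.g.\ $1/3$, or $1/6$ as used in Williams's formulation) gives the claimed high-probability polynomial-time algorithm; since a BPP-style algorithm is a special case of a nondeterministic algorithm, the ``non-deterministically'' clause of the corollary follows as well.

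The only non-routine step I see is justifying the clean passage $n' = n + \mathrm{const}$ from the syntactic assumption on the self-reducibility tree to the computational parameter of an NPTM witnessing $f \in$ TotP. The self-reducibility tree and the computation tree of the canonical TotP-witness constructed in \cite{PZ06} differ only in the handful of extra deterministic steps used to test non-emptiness of sub-instances (so that $M_f$ branches only when both children are non-empty) and in the single extra branching appended on the rightmost path; all of these contribute $O(1)$ additional depth and leave the non-determinism count at $n + O(1)$. This is not really an obstacle so much as a bookkeeping check, and in fact it is precisely the observation already made in the discussion following Theorem \ref{main-theorem} for examples such as $\#\mathrm{IS}$, $\#\mathrm{DNF}$-SAT, and monotone-circuit counting. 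Everything else is a direct specialization of the additive-error approximation scheme already proved.
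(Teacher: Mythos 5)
Your proposal is correct and follows essentially the same route as the paper: the paper's proof simply invokes Corollary \ref{pr-estimation} (equivalently the second statement of Theorem \ref{main-theorem}), noting that the height parameter there is the height of the self-reducibility tree, which is $n + O(1)$ by hypothesis. You have merely made explicit the rescaling from $f(C)/2^{n'}$ to $f(C)/2^{n}$ (the factor $2^{c}$, absorbed by choosing $\xi = \epsilon/2^{c}$) and the $n' = n + O(1)$ bookkeeping, both of which the paper treats as already discussed in the preceding text; no genuinely different ideas are introduced.
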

\begin{proof}
This is a result of corollary \ref{pr-estimation}, where there $n$ essentially denotes the height of the self reducibility tree, as we already discussed in the previous subsection.
\end{proof}

\begin{corollary}
CAPP can be solved with heigh  probability in $poly(n,\epsilon^{-1})$, $\forall \epsilon >0$, for DNF formulas, monotone circuits, tree-monotone circuits, and CNF formulas of $poly(n)$ size and $n$ input gates.
\end{corollary}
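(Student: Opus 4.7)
The plan is to invoke the preceding corollary for each of the four listed classes. For DNF formulas, monotone circuits, and tree-monotone circuits I would verify directly that the $\#$SAT problem is in TotP with self-reducibility tree of height $n+O(1)$, so that the hypotheses of the previous corollary are met; for CNF formulas, whose counting version is not in TotP, I would route through De Morgan duality, which preserves additive approximation with no loss.

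First, for DNF formulas, monotone circuits, and tree-monotone circuits, I would check self-reducibility by fixing one input variable at a time: the sub-instance $\varphi\mid_{x_1=b}$ remains in the same syntactic class (substituting a constant into a DNF yields a DNF, likewise for monotone and tree-monotone circuits), so the recursion stays in the class and produces a tree of height exactly $n$. The decision version is trivially in P for each class (a DNF is satisfiable iff some clause is non-contradictory, and monotone/tree-monotone circuits are satisfiable by their all-ones input, using monotonicity), so $\#$SAT lies in $\#$PE and hence, by the result of \cite{PZ06} cited in the preliminaries, in TotP. The corresponding NPTM built from the self-reducibility tree uses $n+O(1)$ non-deterministic bits, so the previous corollary applies and yields the claimed $\textrm{poly}(n,\epsilon^{-1})$ CAPP algorithm.

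Second, for CNF formulas I would exploit the observation that $\#$SAT on CNF reduces, additive-approximation preservingly, to $\#$SAT on DNF. Given a CNF $\psi(x_1,\dots,x_n)$, let $\varphi$ be the DNF obtained from $\neg\psi$ by pushing negations inward; then the number of unsatisfying assignments of $\psi$ equals the number of satisfying assignments of $\varphi$, so
\[
p_{\mathrm{sat}}(\psi) \;=\; 1 \;-\; p_{\mathrm{sat}}(\varphi).
\]
Running the DNF algorithm on $\varphi$ produces $\hat{q}$ with $|\hat{q}-p_{\mathrm{sat}}(\varphi)|\le \epsilon$ with high probability in time $\textrm{poly}(n,\epsilon^{-1})$; setting $\hat{p}:=1-\hat{q}$ gives an additive $\epsilon$-approximation to $p_{\mathrm{sat}}(\psi)$, which is exactly what CAPP asks for (any $\epsilon\le 1/6$ being enough for the Williams-style consequences).

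The main obstacle, such as it is, is purely a bookkeeping check: verifying for each class that the per-variable self-reducibility stays inside the class and produces a height-$n$ tree, so that $n'=n+O(1)$ and the previous corollary's hypothesis is met. The tree-monotone case relies on the claim, cited from \cite{BCPPZ}, that this class is closed under the natural variable-substitution self-reduction; the CNF case is immediate once the duality identity above is written down. No new probabilistic or combinatorial machinery is needed beyond what is already developed.
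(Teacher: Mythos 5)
Your proposal is correct and follows essentially the same route as the paper: for DNF, monotone, and tree-monotone circuits you verify TotP membership and a height-$(n+O(1))$ self-reducibility tree via per-variable substitution (deferring the tree-monotone closure to \cite{BCPPZ}, exactly as the paper does), and for CNF you use the identical De Morgan duality $p_{\mathrm{sat}}(\psi)=1-p_{\mathrm{sat}}(\varphi)$ to transfer an additive $\epsilon$-estimate from the DNF side. The only difference is cosmetic: you spell out the easy-decision-version facts (non-contradictory term for DNF, all-ones input for monotone) that the paper leaves implicit.
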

\begin{proof}
The problem of counting satisfying assignments of DNF formulas, monotone circuits, and tree-monotone circuits, belongs to TotP. 

To see that the corresponding self reducibility tree is of height n+constant, observe that the number of sat.assignments of a DNF formula equals the sum of sat.assignments of the two DNF subformulas that result when we set the first variable to 0 and 1 respectively. 

The same holds for monotone circuits. For tree-monotone circuits the proof is more complicated, see \cite{BCPPZ}.

To see that the result holds for CNF formulas too, observe that if $\phi$ is a CNF formula, then its negation $\bar{\phi}$ can easily be transformed to a DNF $\psi$ with the same number of variables, using De Morgan's laws.

So if $p=\Pr_x[\phi(x)=1]$ and $q=\Pr_x[\psi(x)=1],$ then $p=1-q.$ 
If $\hat{q}=q\pm \epsilon$ then $\hat{p}=1-\hat{q}=p\pm \epsilon.$
\end{proof}

As for the GapCSAT problem, for circuits  whose counting version problem is in TotP, is solved in P by definition (of TotP). By our algorithm, it is also solved in randomized polynomial time, for circuits for which the problem of counting non-satisfying assignments is in TotP, like CNFs, and in particular it can be solved for any gap $\rho$, (i.e. the number of solutions is either $0$ or $>\rho2^n.$

\begin{corollary}
The GapCSAT problem, for any gap $\rho$, is in randomized polynomial time for circuits s.t. (a)counting the number of solutions is in TotP, or (b)counting the number of non-solutions is in TotP. (e.g. DNF, CNF).
\end{corollary}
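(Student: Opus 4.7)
The plan is to split into the two cases separately. For part (a), the result is essentially free: if counting the satisfying assignments of $C$ lies in $\totp$, then the decision version ``is $C$ satisfiable?'' is in $\cP$, since $\totp \subseteq \sPE$ by definition. So GapCSAT for any gap $\rho$ is solved in deterministic polynomial time on this family, with no approximation step required and no randomness.

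The content is in part (b). Let $g$ denote the $\totp$ function counting the unsatisfying assignments of the input circuit $C$ on $n$ input gates, and set $q(C) = g(C)/2^n$. If $C$ has more than $\rho \cdot 2^n$ satisfying assignments then $q(C) < 1-\rho$, while if $C$ has no satisfying assignment at all then $q(C)=1$. The YES and NO instances of GapCSAT are thus separated by an additive gap of at least $\rho$ in the quantity $q$. The plan is to estimate this quantity by running the main approximation algorithm of Theorem \ref{main-theorem}/Corollary \ref{pr-estimation} on $g$ with additive error parameter $\xi = \rho/3$ and confidence $1-\delta$, producing $\hat{q}$ with $|\hat{q}-q(C)|\le \rho/3$ with probability at least $1-\delta$ in time $\mathrm{poly}(n, \rho^{-1}, \log\delta^{-1})$. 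The decision rule is then to output ``satisfiable'' iff $\hat{q} \le 1-\rho/2$. Correctness is a direct threshold argument: in the YES case $\hat{q} \le (1-\rho) + \rho/3 = 1 - 2\rho/3 < 1-\rho/2$, and in the NO case $\hat{q} \ge 1 - \rho/3 > 1-\rho/2$. For any $\rho$ that is at least inverse-polynomial in $n$, this is randomized polynomial time.

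The only step requiring some care is the distinction between $n$, the number of input gates of $C$, and the quantity $n'$, the number of nondeterministic bits used by the $\totp$ machine $M_g$ that Corollary \ref{pr-estimation} actually normalizes by: its additive guarantee is with respect to $2^{n'}$, not $2^n$. For the concrete examples cited in the statement (CNF, and the circuit families already handled in the previous corollary) one has $n' = n + O(1)$ via the natural self-reduction that fixes one input variable and recurses, so $g(C)/2^{n'}$ and $q(C) = g(C)/2^n$ agree up to a constant factor that is harmlessly folded into the choice of $\xi$. Once this bookkeeping is handled, the remainder of the proof is just the threshold comparison above, and there is no substantive obstacle beyond it.
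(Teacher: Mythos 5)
Your proposal is correct and follows essentially the same route as the paper: part (a) is dispatched via $\totp \subseteq \sPE$, and part (b) applies the additive-error algorithm (Theorem \ref{main-theorem}) to the non-solution count and thresholds. The paper sets $\xi=\rho/2$ while you use $\xi=\rho/3$ with threshold $1-\rho/2$, but that is a cosmetic difference; your explicit handling of the $n$-versus-$n'$ bookkeeping is a slight improvement in care over the paper's terse proof, which leaves that point implicit.
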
  
\begin{proof}
If the number of solutions are either $0$ or $>\rho 2^n,$ then the number of non solutions are either $2^n$ or $(1-\rho) 2^n$, so it suffices to apply our algorithm (theorem \ref{main-theorem}) with $\xi=\rho/2.$
\end{proof}

These results, combined with proofs of the given references, could give some lower bounds for circuits as the above. Also an additive approximation reduction even non deterministic, and even in subexponential time, from circuit sat to a problem in TotP, would give lower bounds for P/poly.

\subsection*{Acknowledgements}I want to thank Manolis Zampetakis for mentioning to me the relationship between these results and circuit lower bounds.

\end{document}